\newcommand{\define}{\,\widehat{=}\, }
\newcommand{\xx}{\mathbf{x}}
\newcommand{\uu}{\mathbf{u}}
\newcommand{\fb}{\mathbf{f}}
\newcommand{\<}{\langle}
\newcommand{\rg}{\rangle}
\newcommand{\ud}{\mathrm{d}}
\newcommand{\tr}[1]{\mathcal{T}\!r(\mathcal{#1})}
\newcommand{\oomit}[1]{}
\begin{document}

\pagestyle{plain}

\title{Synthesizing Switching Controllers for Hybrid Systems by Continuous Invariant Generation}
\author 
{Deepak Kapur\inst{1} \and Naijun Zhan\inst{2} \and Hengjun Zhao\inst{2,3}}
\institute
{ Dept.~of Comput. Sci., University of New Mexico,
  Albuquerque, NM, USA\\
  \email{kapur@cs.unm.edu}
  \and
  State Key Lab. of Comput. Sci.,
  Institute of Software, CAS, Beijing, China\\
  \email{znj@ios.ac.cn}
  \and
  University of Chinese Academy of Sciences, Beijing, China\\
  \email{zhaohj@ios.ac.cn}
}
\maketitle

\begin{abstract}
We extend a template-based approach for synthesizing switching controllers
for semi-algebraic hybrid systems, in which all expressions are
polynomials. This is achieved by combining a QE (quantifier elimination)-based method for
generating continuous invariants with a qualitative approach for predefining templates. Our synthesis method is relatively complete with regard to a given family of predefined templates. Using qualitative analysis, we discuss heuristics to reduce the numbers of parameters appearing in the templates. To avoid too much human interaction in choosing templates as well as the high computational complexity caused by QE, we further investigate applications of the SOS (sum-of-squares) relaxation approach and the template polyhedra approach in continuous invariant generation, which are both well supported by efficient numerical solvers.
\end{abstract}

\section{Introduction}

Hybrid systems, in which computations proceed both by continuous evolutions as well as discrete jumps simulating transition from one mode to another mode, are often used to model devices controlled by computers in many application domains \cite{Alur11}. Combining ideas from state machines in computer science and control theory, formal analysis, verification and synthesis of hybrid systems have been an important area of active research. In verification problems, a given hybrid system is required to satisfy a desired safety property e.g. that the temperature of a nuclear reactor will never go beyond a maximum threshold, as it may cause serious economic, human and/or environmental damage, thus implying that the system will never enter any unsafe state. A synthesis problem is harder given that the focus is on designing a controller that ensures the given system will satisfy
a safety requirement, reach a given set of states, or meet an optimality criterion, or a desired combination of these requirements.

Automata-theoretic and logical approaches have been primarily
used for verification and synthesis of hybrid systems
\cite{ACHH93,Pnueli00,Sastry00}. In \cite{Pnueli00,Sastry00}, a
general framework for controller synthesis based on hybrid
automata was proposed, which relies on \emph{backward reachable set} computation and \emph{fixed point iteration}. Two restrictions of this approach are (i) the computation of backward reachable set is hard for most continuous dynamics, and (ii) termination of the fixpoint iteration procedure cannot be guaranteed, even for those hybrid systems whose backward reachable sets are easily computable. Thus most of the research, e.g. \cite{Jha09}, focuses on overcoming the above two restrictions.

Recently, a deductive approach for verification and synthesis
based on constraint solving was proposed in \cite{Tiwari08,Platzer08,PlatzerClarke08,Taly11,aplas,Sturm-Tiwari-issac11}. The central idea is to reduce verification and synthesis problems of hybrid systems to invariant generation problems, much like verification of programs.
As proposed in \cite{Kapur03,Kapur06,SankaranarayananSipmaMannaPOPL04}, if invariants are hypothesized to be of certain shapes, then corresponding templates with associated parameters can be used and the invariant generation problem can be reduced to constraint solving over parameters by quantifier elimination. This methodology is used in \cite{Taly11}
for synthesizing switching controllers meeting safety requirements, while in \cite{Taly-10-reach}, the approach is extended for satisfying both safety and reachability requirements. A common problem with template-based method is that it heavily relies on a user specifying the shape of invariants that are of interest, thus making it interactive and user driven, raising doubts about its scalability and automation. Besides, the inference rules for inductive invariants in \cite{Taly09,Taly11,Taly-10-reach} are sound and complete for classes of invariants, e.g. smooth, quadratic and convex invariants, but are not complete for generic semi-algebraic sets.

Inspired by \cite{Kapur97,Pnueli00,Taly11} and \cite{emsoft11}, we extend in this paper the template-based invariant generation approach for synthesizing switching controllers of hybrid systems to meet given safety requirements. The paper makes the following contributions:
\begin{itemize}
  \item We formalize the solution to switching controller synthesis problem (Problem \ref{dfn:prblm}) of hybrid systems in terms of continuous invariants (see Theorem \ref{thm:main}), and thus lay the foundation of the synthesis method based on continuous invariant generation using templates and constraint solving.
  \item In the QE-based synthesis framework we use the method for continuous invariant generation proposed in \cite{emsoft11} as an integral component. This method is proved in \cite{emsoft11} to be sound and relatively complete with respect to a given shape of invariants (i.e. a given family of predefined templates). As a result, in contrast to the methods used in \cite{Taly09,Taly11,Taly-10-reach}, there is more flexibility in our approach because of the possibility of discovering all possible invariants of the given shape.
    \item Using the qualitative approach proposed in \cite{Kapur97} for analyzing continuous evolution in certain modes of a hybrid system, we can develop heuristics to determine a more precise shape of templates to be used as invariants, thus reducing the numbers of parameters appearing in templates.
    \item We further improve the degree of automation and scalability of the template-based method in two ways: (i) for general polynomial templates, using \emph{sum-of-squares} (SOS) relaxation, the constraint on parameters appearing in templates is transformed into a \emph{semi-definite program}(SDP), which is convex and thus can be solved efficiently; (ii) for linear systems and a special type of templates---template polyhedra, again by sacrificing relative completeness, the continuous invariant generation problem can be reduced to a BMI (\emph{bilinear matrix inequality}) feasibility problem, which is also much easier to solve (numerically) than QE.
  \end{itemize}

\subsection*{Related Work}
Our work in this paper resembles \cite{Taly11} but differs in that: i) our method is cast in the setting of hybrid automata and searches for a family of continuous invariants that refine the original domains, rather than a single global controlled invariant; ii) a sound and complete criterion is used in continuous invariant generation; iii) various techniques are applied for scalability.

The SOS relaxation approach has been successfully used in safety verification of hybrid systems.
In \cite{Prajna04,Prajna07}, the authors used the SOSTOOLS software package \cite{SOSTOOLS} to compute \emph{barrier certificates} for polynomial hybrid systems. In \cite{yzf11,yzf12}, the authors proposed a hybrid symbolic-numeric approach to compute exact inequality invariants of hybrid systems, by first solving (bilinear) SOS programming numerically and then applying \emph{rational vector recovery} techniques.

A necessary and sufficient condition for positive invariance of convex polyhedra for linear continuous systems was provided in \cite{Castelan93}. This condition is extended to linear systems with open polyhedral domain for our need in the paper. Template polyhedra was used in \cite{Sankaranarayanan08tacas,Sankaranarayanan08hscc} to compute positive invariants of hybrid systems by
\emph{policy iteration}, which differs from our treatment of the problem using BMI. Recently, a method for computing polytopic invariants for polynomial dynamical systems using template polyhedra and linear programming was proposed \cite{BenSassi12}.

Mathematical programming techniques and relevant numerical solvers have also been widely applied to static program analysis. Actually, the template polyhedra abstract domain was first proposed in \cite{Sankaranarayanan05} to generate linear program invariants using linear programming. In \cite{Cousot05}, to verify invariance and termination of semi-algebraic programs, verification conditions are abstracted into numerical constraints using Lagrangian relaxation or SOS relaxation, which are then resolved by efficient SDP solvers.

In our recent work \cite{fm2012}, we studied an optimal switching controller synthesis problem arising from an industrial oil pump system with piece-wise constant continuous dynamics. A hybrid approach combining symbolic computation with numerical computation was developed to synthesize safe controllers with better optimal values.

\subsubsection{Paper Structure.}
To be completed.
\newline
1\newline
2\newline
3\newline
4\newline
5\newline
6\newline
7\newline

\section{Problem Description}
Following \cite{Pnueli00,Sastry00},
 we use hybrid automata to model hybrid systems.

\begin{definition}[Hybrid Automaton]\label{dfn:HA}
A hybrid automaton (HA) is a system $\mathcal{H}\,\define\,(Q,X,f,D,E,G)$, where
\begin{itemize}
  \item[$\bullet$]$Q=\{q_1,\ldots,q_m\}$ is a set of discrete states;
  \item[$\bullet$]$X=\{x_1,\ldots,x_n\}$ is a set of continuous state variables, with $\xx=(x_1,\ldots,x_n)$ ranging over $\mathbb R^n$;
  \item[$\bullet$]$f: Q\rightarrow (\mathbb R^n\rightarrow \mathbb R^n)$ assigns to to each discrete state $q\in Q$ a vector field $\fb_q$;
  \item[$\bullet$]$D: Q\rightarrow 2^{\mathbb R^n}$ assigns to each discrete state $q\in Q$ a domain  $D_q\subseteq \mathbb R^n$;
  \item[$\bullet$]$E\subseteq Q\times Q $ is  a set of discrete transitions;
  \item[$\bullet$]$G: E \rightarrow 2^{\mathbb R^n}$ assigns to each transition  $e\in E$ a switching guard  $G_e$ $\subseteq \mathbb R^n$\,.
\end{itemize}
\end{definition}

\begin{remark} For ease of presentation, we make the following assumptions:
  \begin{itemize}
  \item for all $q\in Q$, $\fb_q$ is \emph{polynomial} vector function; besides, $\fb_q$ defines a \emph{complete} vector field, that is, for any $\xx_0\in\mathbb R^n$, the solution to the differential $\dot \xx=\fb_q$ uniquely exists on $[0,\infty)$;
  \item for all $q\in Q$ and all $e\in E$, $D_q$ and $G_e$ are \emph{closed semi-algebraic} sets\setcounter{footnote}{0}\footnote{A set $A\subseteq \mathbb R^n$ is called semi-algebraic if there is a quantifier-free polynomial formula $\varphi$ s.t.
$A=\{\xx\in\mathbb R^n\mid\varphi(\xx)\,\,\mbox{is true}\}$\,.};
  \item the initial condition in each discrete mode is assumed to be identical with the domain, and all reset functions are assumed to be identity mappings.
\end{itemize}
\end{remark}

We use a nuclear reactor system discussed in \cite{ACHHH95,Ho95,Kapur97} as a running example through this paper.
\begin{figure}
\setlength{\unitlength}{1cm}
\begin{center}
\begin{picture}(8,4)(-3,-1.8)
\put(-2,1.8){\oval(2.5,1.3)}
\put(3,1.8){\oval(2.5,1.3)}
\put(-2,-0.9){\oval(2.5,1.3)}
\put(3,-0.9){\oval(2.5,1.3)}

\put(-0.77,1.8){\vector(1,0){2.5}}
\put(1.73,-0.9){\vector(-1,0){2.5}}
\put(3.0,1.14){\vector(0,-1){1.4}}
\put(-2.0,-0.26){\vector(0,1){1.4}}

\put(0.26,1.95){\makebox(0,0)[l]{$\scriptstyle{G_{12}}$}}
\put(0.26,1.66){\makebox(0,0)[l]{$\scriptstyle{p\,=\,0}$}}
\put(0.26,-1.1){\makebox(0,0)[l]{$\scriptstyle{G_{34}}$}}
\put(0.26,-.76){\makebox(0,0)[l]{$\scriptstyle{p\,=\,1}$}}
\put(-2.5,0.5){\makebox(0,0)[l]{$\scriptstyle{G_{41}}$}}
\put(-1.9,0.5){\makebox(0,0)[l]{$\scriptstyle{p\,=\,0}$}}
\put(3.1,0.5){\makebox(0,0)[l]{$\scriptstyle{G_{23}}$}}
\put(2.4,0.5){\makebox(0,0)[l]{$\scriptstyle{p\,=\,1}$}}

\put(-2.6,2.7){\makebox(0,0)[l]{$q_1$\small {:\,no rod}}}
\put(1.8,2.7){\makebox(0,0)[l]{$q_2$\small {:\,being immersed}}}
\put(-2.9,-1.8){\makebox(0,0)[l]{$q_4$\small {:\,being removed}}}
\put(2.3,-1.8){\makebox(0,0)[l]{$q_3$\small {:\,immersed}}}

\put(-3,2.1){\makebox(0,0)[l]{$\scriptstyle{\dot x\,=\,x/10-6p-50}$}}
\put(-3,1.8){\makebox(0,0)[l]{$\scriptstyle{\dot p\,=\,0}$}}
\put(-3,1.5){\makebox(0,0)[l]{$\scriptstyle{D_1\,\define\,p\,=\,0}$}}

\put(2,2.1){\makebox(0,0)[l]{$\scriptstyle{\dot x\,=\,x/10-6p-50}$}}
\put(2,1.8){\makebox(0,0)[l]{$\scriptstyle{\dot p\,=\,1}$}}
\put(2,1.5){\makebox(0,0)[l]{$\scriptstyle{D_2\,\define\,0\,\leq p\,\leq 1}$}}

\put(2,-.6){\makebox(0,0)[l]{$\scriptstyle{\dot x\,=\,x/10-6p-50}$}}
\put(2,-.9){\makebox(0,0)[l]{$\scriptstyle{\dot p\,=\,0}$}}
\put(2,-1.2){\makebox(0,0)[l]{$\scriptstyle{D_3\,\define\,p\,=\,1}$}}

\put(-3,-0.6){\makebox(0,0)[l]{$\scriptstyle{\dot x\,=\,x/10-6p-50}$}}
\put(-3,-0.9){\makebox(0,0)[l]{$\scriptstyle{\dot p\,=\,-1}$}}
\put(-3,-1.2){\makebox(0,0)[l]{$\scriptstyle{D_4\,\define\,0\,\leq p\,\leq 1}$}}
\end{picture}
\caption{Nuclear reactor temperature control.}\label{fig:nuclear}
\end{center}
\end{figure}
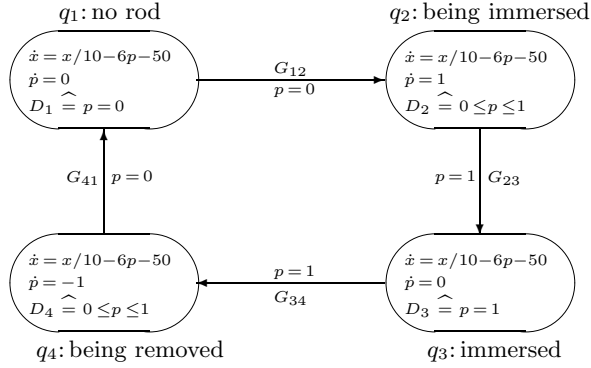
\begin{example}\label{ex:nuclear0}
The nuclear reactor system consists of a reactor core and a cooling rod which is immersed into and removed out of the core periodically to keep the temperature of the core, denoted by $x$, in a certain range.
Denote the fraction of the rod immersed into the reactor by $p$. Then the
initial specification of this system can be represented using the hybrid automaton in Fig. \ref{fig:nuclear}.
\end{example}

The semantics of a hybrid automaton $\mathcal{H}$ can be defined by the set of trajectories it accepts. For the formal definitions of \emph{hybrid time set} and \emph{hybrid trajectory} the readers are referred to \cite{Sastry00}.
We denote the set of trajectories of $\mathcal{H}$ by $\tr{H}$, ranged over $\omega,\omega_1,\ldots$.
All trajectories of $\mathcal{H}$ starting from the initial state $(q_0,\xx_0)$ is denoted by $\tr{H}(q_0,\xx_0)$.

The domain of a hybrid automaton $\mathcal H$ is defined as  $D_{\mathcal{H}}\,\define\,\bigcup_{q\in Q}(\{q\}\times D_q)$. We call $\mathcal H$ \emph{non-blocking} if for any $(q,\xx)\in D_{\mathcal H}$, there is a hybrid trajectory from $(q,\xx)$ which can either be extended to infinite time $t=\infty$ or execute infinitely many discrete transitions; otherwise $\mathcal H$ is called \emph{blocking}.

A \emph{safety requirement} $S$ assigns to each mode $q\in Q$
a safe region $S_q\subseteq \mathbb R^n$, i.e. $S=\bigcup_{q\in
Q}(\{q\}\times  S_q)$. Alternatively, there could be a global
safety requirement $S$ which all modes are required to satisfy.

According to \cite{Pnueli00}, the switching controller synthesis problem with regard to a given safety requirement can be formally defined as follows:

\begin{problem}[Controller Synthesis for Safety]\label{dfn:prblm}
Given a  hybrid automaton  $\mathcal{H}$ and a safety property $S$, find a hybrid automaton $\mathcal H'=(Q,X,f,D',E,G')$ such that
\begin{itemize}
  \item[(r1)]Refinement: for any $q\in Q$, $D'_q\subseteq D_q$, and for any $e\in E$, $G'_e\subseteq G_e$;
  \item[(r2)]Safety: for any trajectory $\omega$ that $\mathcal H'$ accepts, if $(q,\xx)$ is on $\omega$, then $\xx\in S_q$;
  \item[(r3)]Non-blocking: $\mathcal H'$ is non-blocking.
\end{itemize}
\end{problem}

If such $\mathcal H'$ exists, then $\mathcal{SC}\,\define\,\{G'_e\subseteq   \mathbb R^n\mid e\in E\}$ is a \emph{switching controller} satisfying safety requirement $S$, and $D_{\mathcal H'}\,\define\,\bigcup_{q\in Q}(\{q\}\times D'_q)$ is the \emph{controlled invariant set} rendered by $\mathcal {SC}$.

\section{A QE-Based Approach} \label{sec:synthesis}
\subsection{Continuous Invariant}
  Along the line of \cite{Taly11}, we consider the switching controller synthesis problem by combining a relative complete method for generating continuous invariants in \cite{emsoft11}, and heuristics for predefining templates for these invariants using qualitative analysis in \cite{Kapur97}.
Below, we review the concept of \emph{continuous invariant} used
in \cite{emsoft11} based on a related concept in \cite{PlatzerClarke08}.

\begin{definition}[Continuous Invariant (CI)]\label{dfn:DI}
Given a mode $q \in Q$ in a hybrid automaton $\mathcal{H}$, a set
$P\subseteq \mathbb R^n$ is called a continuous invariant of
$(D_q,\fb_q)$ if for all $\xx_0\in P\cap D_q$ and all $T\geq 0$, the solution $\xx(t)$\footnote{We assume the existence and uniqueness of solutions is guaranteed by $\fb_q$.} of $\dot \xx=\fb_q(\xx)$ over $[0,T]$ with $\xx(0)=\xx_0$ satisfies
$$(\forall t\in[0,T].\,\xx(t)\in D_q) \, \longrightarrow  \,(\forall t\in [0,T].\,\xx(t)\in P)\enspace .$$
\end{definition}

Intuitively, $P$ is a CI of $(D_q,\fb_q)$ if
any continuous evolution starting from the intersection of $P$
and $D_q$ stays in $P$ as long as it is still in $D_q$.
If $D_q=\mathbb R^n$, then a CI of $(D_q,\fb_q)$ coincides with the standard \emph{(positive) invariant set} (see \cite{Blanchini99}) of the dynamical system defined by $\fb_q$; otherwise if $D_q$ is a proper subset of $\mathbb R^n$, then generally the notion of CI is weaker.

\begin{example}\label{eg:di}
Suppose
$D_q\,\define\,x>0$ and $\fb_q=(-y,x)$. Obviously $P\,\define\,y\geq 0$ is not a positive invariant set of $\fb_q$, whereas $P$ is a CI of $(D_q,\fb_q)$ according to Definition \ref{dfn:DI}. See Fig.~\ref{fig:CI} for an illustration.
\end{example}
\begin{figure}
\begin{center}
\includegraphics[width=1.7in,height=1.7in]{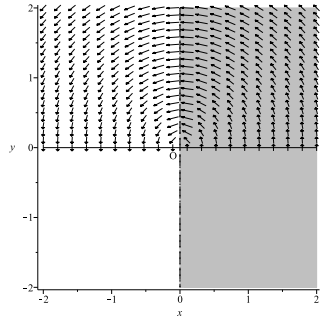}
\end{center}
\caption{Illustration of continuous invariant.}
\label{fig:CI}
\end{figure}

\subsection{The Abstract Synthesis Procedure}\label{sec:abstract-frame}
To solve Problem \ref{dfn:prblm} amounts to refining domains and guards of $\mathcal{H}$ by removing \emph{bad} states from domain $D_{\mathcal{H}}$. A state
$(q, \xx) \in  D_{\mathcal{H}}$ is \emph{bad} if
the hybrid trajectory starting from $(q,\xx)$ either blocks
$\mathcal H$ or violates $S$;
otherwise, it is called a \emph{good} state. From Definition
\ref{dfn:DI}, we observe that the set of good states of $\mathcal H$ can be approximated using continuous invariants, which results in the following solution to Problem \ref{dfn:prblm}.

\begin{theorem}\label{thm:main}
Let $\mathcal H$ and $S$ be the same as in Problem \ref{dfn:prblm}.
Suppose $D_q'$ is a closed subset of $\mathbb R^n$ for all $q\in Q$ and $D_q'\neq \emptyset$ for at least one $q$. If we have
\begin{itemize}
  \item[(c1)] for all $q\in Q$, $D_q'\subseteq D_q\cap S_q$;
  \item[(c2)] for all $q\in Q$, \,$D_q'$ is a continuous invariant of $(H_q,\fb_q)$ with $$H_q\,\define\,\big(\bigcup_{e=(q,q')\in E}G_e'\big)^c\,,$$
\end{itemize}
where $G_e'\,\define\,G_e\cap D_{q'}'$ and $A^c$ denotes the complement of $A$ in $\mathbb R^n$,\,
then the HA $\mathcal H'=(Q,X,f,D',E,G')$ is a solution to
Problem \ref{dfn:prblm}.
\end{theorem}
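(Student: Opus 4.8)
The plan is to verify the three requirements (r1)--(r3) of Problem~\ref{dfn:prblm} for $\mathcal{H}'$ in turn. Refinement and safety will follow almost directly from hypothesis~(c1), while the entire force of the continuous-invariance hypothesis~(c2) will be spent on establishing non-blocking.

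First I would dispatch refinement~(r1): hypothesis~(c1) gives $D'_q \subseteq D_q \cap S_q \subseteq D_q$, and by definition $G'_e = G_e \cap D'_{q'} \subseteq G_e$, so both inclusions hold. For safety~(r2) I would appeal to the semantics of hybrid trajectories: along any trajectory of $\mathcal{H}'$, during every phase of continuous evolution in a mode $q$ the state is constrained to lie in the domain $D'_q$, and at every discrete jump $e=(q,q')$ the state lies in the enabling guard $G'_e = G_e \cap D'_{q'} \subseteq D'_{q'}$. Hence every point $(q,\xx)$ occurring on a trajectory satisfies $\xx \in D'_q$ (respectively $\xx \in D'_{q'}$ immediately after a jump), and (c1) then yields $\xx \in S_q$. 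Note that (c2) plays no role here: shrinking the domains inside the safe sets already forces safety.

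The substance of the theorem is non-blocking~(r3), where the plan is a first-exit-time argument driven by~(c2). Fix a state $(q,\xx) \in D_{\mathcal{H}'}$, so $\xx \in D'_q$. Since each $G_e$ and each $D'_{q'}$ is closed, every $G'_e$ is closed, so the finite union $\bigcup_{e=(q,q')\in E} G'_e$ is closed and therefore $H_q$ is open. If $\xx \notin H_q$, then $\xx$ already lies in some refined guard $G'_e$ with $e=(q,q')$, and since $G'_e \subseteq D'_{q'}$ the jump reaches another admissible state $(q',\xx)$. If instead $\xx \in D'_q \cap H_q$, I would examine the flow $\xx(t)$ of $\dot{\xx}=\fb_q$ from $\xx$, which exists on $[0,\infty)$ by completeness of $\fb_q$. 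Either $\xx(t) \in H_q$ for all $t\ge 0$, in which case applying~(c2) for every $T$ shows $\xx(t) \in D'_q$ for all $t$, giving a trajectory extended to $t=\infty$; or the flow meets $H_q^c$, and because $H_q^c$ is closed the first-exit time $t^\ast \define \inf\{t \ge 0 \mid \xx(t) \in H_q^c\}$ is attained. On $[0,t^\ast)$ the flow stays in $H_q$, so~(c2) gives $\xx(t)\in D'_q$ there, and closedness of $D'_q$ with continuity yields $\xx(t^\ast) \in D'_q$; simultaneously $\xx(t^\ast) \in \bigcup_{e=(q,q')\in E} G'_e$, so some guard $e=(q,q')$ is enabled and the jump reaches the admissible state $(q',\xx(t^\ast))$. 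Iterating this construction from each newly reached admissible state produces a trajectory that is either extended to infinite time or executes infinitely many discrete transitions, which is exactly non-blocking.

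I expect the crux, and the only place demanding care, to be this last step: hypothesis~(c2) is precisely what rules out the flow \emph{escaping} $D'_q$ through a part of its boundary carrying no enabled guard, which would block $\mathcal{H}'$. Getting this right requires correctly matching the direction of the implication in Definition~\ref{dfn:DI}, with $H_q$ as the domain and $D'_q$ as the invariant, and handling the exit instant via the topological facts that the guards are closed (so $H_q$ is open and the first-exit time is attained) and that $D'_q$ is closed (so the limit point of the flow at $t^\ast$ still lies in $D'_q$). The non-emptiness assumption on some $D'_q$ serves only to guarantee that $\mathcal{H}'$ is a non-degenerate solution rather than the vacuous automaton with empty domain.
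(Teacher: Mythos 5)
Your overall decomposition coincides with the paper's proof, and your treatment of (r1) and (r3) is sound and essentially identical to it: for (r3) the paper takes the maximal $T$ with $\xx(t)\in D'_{q_0}\cap H_{q_0}$ on $[0,T)$ where you take the first hitting time of the closed set $H_q^c$, but both arguments rest on the same ingredients (openness of $H_q$ from closedness of the refined guards, hypothesis (c2), closedness of $D'_q$ at the exit instant, completeness of $\fb_q$), and both conclude with ``extendable to infinite time or infinitely many jumps.''

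There is, however, a genuine gap in your proof of (r2). You assert that ``during every phase of continuous evolution in a mode $q$ the state is constrained to lie in the domain $D'_q$,'' and conclude that every point $(q,\xx)$ on a trajectory lies in $D'_q$. Under the paper's semantics (Definition~\ref{dfn:HT}, condition~3), the domain constraint is imposed only on the \emph{half-open} interval $[\tau_i,\tau_i')$: at the right endpoint $\tau_i'$ the semantics requires only $\beta_i(\tau_i')\in G'_e\subseteq D'_{q_{i+1}}$ when a jump occurs, and requires nothing at all at the endpoint of a right-closed final interval. The point $\bigl(q_i,\beta_i(\tau_i')\bigr)$ is still on the trajectory in the \emph{old} mode $q_i$, and membership in $D'_{q_{i+1}}$ yields $S_{q_{i+1}}$, not $S_{q_i}$, so your argument does not establish safety at jump instants or at the final instant. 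This is exactly where the closedness hypothesis on $D'_q$ enters (r2)---contrary to your remark, it is not needed only for the exit-time argument in (r3): since $\beta_i(t)\in D'_{q_i}$ for all $t\in[\tau_i,\tau_i')$, $\beta_i$ is continuous, and $D'_{q_i}$ is closed, one gets $\beta_i(\tau_i')\in D'_{q_i}\subseteq S_{q_i}$. Without closedness the claim can actually fail: the flow may exit a non-closed $D'_{q_i}$ precisely at the jump time while landing in $G'_e$, violating $S_{q_i}$ in the current mode. The paper's induction on $\langle\tau\rangle$ makes this endpoint step explicit; you clearly know the technique, since you apply it at $t^{\ast}$ in (r3)---it must be applied in (r2) as well.
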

\begin{proof}
  Please refer to the Appendix \ref{app:a}.\qed
\end{proof}
\begin{remark} Intuitively, by (c1), $D_q'$ is a refinement of $D_q$ and is also contained in the safe region; by (c2), any trajectory starting from $D_q'$ will either stay in $D_q'$ forever, or finally intersect one of the transition guards enabling jumps from $q$ to a certain $q'$, thus guaranteeing satisfaction of the non-blocking requirement.
\end{remark}

Based on Theorem \ref{thm:main}, we give below the steps of a
template-based method for synthesizing a switching controller.
\begin{itemize}
  \item[(s1)] {\bf Template assignment:} assign to each $q\in Q$
    a template parametrically specifying $D_q'$, which can be seen as a refinement of $D_q$ and will be instantiated to be the continuous invariant at $q$;
  \item[(s2)] {\bf Guard refinement:} refine guard $G_e$ for each $e=(q,q')\in E$ by setting $G_e'\,\define\,G_e\cap D_{q'}'$\,;
  \item[(s3)] {\bf Deriving synthesis conditions:} encode
    (c1) and (c2) in Theorem \ref{thm:main} into constraints on parameters appearing in templates;
  \item[(s4)] {\bf Constraint solving:} solve the constraints
    derived from (s3) in terms of the parameters;
  \item[(s5)] {\bf Parameters instantiation:} find an appropriate instantiation of $D_q'$ and $G_e'$ such that $D_q'$ are closed sets for all $q\in Q$, and $D_q'$ is nonempty for at least one $q\in Q$; if such an instantiation is not found, we choose a new set of templates and go back to (s1).
\end{itemize}

\paragraph{Remarks:}
\begin{enumerate}
    \item The implementability of the above method depends on the language used to specify the hybrid system, the safety property, as well as the templates chosen for their refinements. If all appearing expressions are specified using polynomials, the computability of the abstract procedure is guaranteed by Tarski's result \cite{Tarski51}. This will be assumed in the rest of this paper.
    \item In (s3), condition (c1) can be encoded into a first-order polynomial formula straightforwardly; encoding of (c2) into first-order polynomial constraints is based on our previous work in \cite{emsoft11} about a relatively complete method for generating CIs (see Section~\ref{sec:gdi}).
    \item We use \emph{quantifier elimination} (QE) to solve the first-order polynomial constraints obtained in (s4).
    \item The shape of chosen templates in (s1) determines the likelihood of success of the above procedure, as well as the complexity of QE in (s4). In Section \ref{sec:heuristic}, we discuss heuristics for choosing appropriate templates using the \emph{qualitative analysis} discussed in \cite{Kapur97}.
\end{enumerate}

\subsection{A Relatively Complete Method for Generating CIs}\label{sec:gdi}
In \cite{emsoft11} we presented a sound and relatively complete
approach for generating \emph{semi-algebraic} CIs for $(D_q,\fb_q)$ with semi-algebraic $D_q$ and polynomial vector function $\fb_q$. We review here the key ideas; for details the reader can consult \cite{emsoft11}. Below, we drop the subscript corresponding to the mode $q$.

The basic idea can be explained as follows for the simplest case, namely $D\,\define\,h(\xx)>0$ and $P\,\define\,p(\xx)\geq 0$.
Let $\partial P\,\define\,p(\xx)=0$ be the boundary of $P$ and $\xx(t)$ be the continuous evolution of $\fb$ starting from $\xx_0$. It can be shown that $P$ is a CI of $(D,\fb)$ if and only if for all $\xx_0\in \partial P\cap D\,$:
\begin{equation}
\exists\varepsilon >0\,\forall t\in [0,\varepsilon].\,p(\xx(t))\in P  \enspace. \label{eqn:CI-condition}
\end{equation}
Intuitively, (\ref{eqn:CI-condition}) means that trajectories starting at the the boundary of $P$ will stay in $P$ for a small amount of time.

Given that $p$ and $\fb$ are polynomials and thus analytic, the \emph{Taylor expansion} of $p(\xx(t))$ at $t=0$
\begin{eqnarray}
p(\xx(t))& = &p(\xx_0)+\frac{\ud p}{\ud t}\cdot t + \frac{\ud^{2} p}{\ud t^{2}}\cdot\frac{t^{2}}{2!}+\cdots
\label{eqn:taylor1}
\end{eqnarray}
converges in a neighborhood of $0$.

Define the {\itshape
Lie derivatives} of $p$ along $\fb$, $L^n_{\fb} p: \mathbb
R^n\longrightarrow \mathbb R$ for $n\in \mathbb N$, as follows:
\begin{itemize}
\item[$\bullet$] $L^0_{\fb} p(\xx)=p(\xx)$\,;
\item[$\bullet$] $L^n_{\fb} p(\xx)=\big(\frac{\partial}{\partial \xx} L^{n-1}_{\fb} p(\xx),
\fb(\xx)$\big), for $n>0$,
\end{itemize}
where $(\cdot, \cdot)$ is the inner product of two vectors, i.e. $\big(
\,(a_1,\ldots,a_n), ( b_1,\ldots, b_n )\,\big) = \sum_{i=1}^n a_ib_i$.
Using Lie derivatives, (\ref{eqn:taylor1}) rewritten as
\begin{eqnarray}
p(\xx(t))& = & L_{\fb}^0 p(\xx_0)+ L_{\fb}^1 p(\xx_0)\cdot t + L_{\fb}^2 p(\xx_0)\cdot \frac{t^{2}}{2!}+\cdots+L_{\fb}^i p(\xx_0)\frac{t^{i}}{i!}+\cdots \enspace \enspace \label{eqn:taylor2}
\end{eqnarray}

Combining (\ref{eqn:CI-condition}) and (\ref{eqn:taylor2}), our main result of continuous invariant generation (in the simplest case) can be stated as follows.
\begin{theorem}[Necessary and Sufficient Criterion for CIs \cite{emsoft11}]\label{thm:crit}
Given a system $(D, \mathbf{f})$ with $D\, \define \,
h(\xx)> 0$, it has a continuous invariant of
the form $P\,\define\, p(\xx)\geq 0$ if and only if\,
$\forall \xx. \big(p(\xx)=0\wedge h(\xx)>0\longrightarrow \psi(p,\fb)\big)$,
where
\begin{displaymath}
  \psi(p,\mathbf{f}) \,\define \,
 \begin{array}{ll}
     & L^1_{\fb} p(\xx) >0 \\
\vee\, & L^1_{\fb} p(\xx) =0\wedge L^2_{\fb} p(\xx) >0 \\
\vee\, & \cdots\\
\vee\, & L^1_{\fb} p(\xx) =0\wedge\cdots\wedge L^{N_{p,\fb}-1}_{\fb} p(\xx)=0 \wedge L^{N_{p,\fb}}_{\fb} p(\xx)>0 \\
\vee\, & L^1_{\fb} p(\xx) =0\wedge\cdots\wedge L^{N_{p,\fb}-1}_{\fb} p(\xx)=0 \wedge L^{N_{p,\fb}}_{\fb} p(\xx)=0
\end{array}
\end{displaymath}
with $N_{p,\fb}\in\mathbb N$ computed from $p$ and $\fb$.
\end{theorem}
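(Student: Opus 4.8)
The plan is to anchor the argument on the boundary characterization already recorded above: $P\define p(\xx)\geq 0$ is a continuous invariant of $(D,\fb)$ if and only if every boundary point $\xx_0$ with $p(\xx_0)=0$ and $h(\xx_0)>0$ satisfies (\ref{eqn:CI-condition}), i.e. $p(\xx(t))\geq 0$ on some right-neighborhood $[0,\varepsilon]$ of $0$. It therefore suffices to translate this local nonnegativity condition at a single such $\xx_0$ into the algebraic formula $\psi(p,\fb)$ evaluated at $\xx_0$, and then to universally quantify over all boundary points; the two implications of the theorem will then fall out together.

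First I would fix $\xx_0$ with $p(\xx_0)=0$, $h(\xx_0)>0$ and study the scalar function $g(t)\define p(\xx(t))$. Because $\fb$ is polynomial --- hence real-analytic --- and defines a complete vector field, the solution $\xx(t)$ is real-analytic in $t$, so $g$ is real-analytic near $0$; moreover its Taylor coefficients are exactly the Lie derivatives, $g^{(i)}(0)=L^i_\fb p(\xx_0)$, which is the content of (\ref{eqn:taylor1})--(\ref{eqn:taylor2}). The nonnegativity of an analytic $g$ with $g(0)=0$ on a right-neighborhood of $0$ is then governed by the sign of its first nonvanishing derivative: either all $g^{(i)}(0)$ $(i\geq 1)$ vanish, in which case analyticity on a connected neighborhood forces $g\equiv 0$ near $0$ and the condition holds trivially; or there is a least $k\geq 1$ with $g^{(k)}(0)\neq 0$, and then $g(t)=\frac{g^{(k)}(0)}{k!}\,t^k(1+o(1))$ shows the condition holds precisely when $g^{(k)}(0)>0$. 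Rewriting this dichotomy through $g^{(i)}(0)=L^i_\fb p(\xx_0)$ yields an \emph{infinite} disjunction: the first nonzero Lie derivative at $\xx_0$ is positive, or all Lie derivatives vanish at $\xx_0$.

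The main obstacle --- and the only genuinely nontrivial step --- is to collapse this infinite disjunction to the finite formula $\psi(p,\fb)$, i.e. to produce the uniform bound $N_{p,\fb}$. Here I would invoke Noetherianity of $\mathbb{R}[\xx]$: the ascending chain $\langle L^1_\fb p\rangle\subseteq\langle L^1_\fb p,L^2_\fb p\rangle\subseteq\cdots$ stabilizes, so there is a least index $N_{p,\fb}$, abbreviated $N$, with $L^{N+1}_\fb p\in I\define\langle L^1_\fb p,\ldots,L^N_\fb p\rangle$. The key closure claim is that this membership propagates upward: since $L_\fb$ is a derivation, writing $L^{N+1}_\fb p=\sum_{i=1}^N a_i L^i_\fb p$ and applying $L_\fb(\cdot)$ via the Leibniz rule gives $L^{N+2}_\fb p=\sum_{i=1}^N (L_\fb a_i)L^i_\fb p+\sum_{i=1}^N a_i L^{i+1}_\fb p$, and the last summand reintroduces only $L^2_\fb p,\ldots,L^{N+1}_\fb p$, all already in $I$. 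An induction on the exponent then shows $L^i_\fb p\in I$ for every $i>N$. Consequently, at any point where $L^1_\fb p=\cdots=L^N_\fb p=0$ all higher Lie derivatives vanish as well, so "all Lie derivatives vanish at $\xx_0$" is equivalent to the single finite conjunction $L^1_\fb p(\xx_0)=\cdots=L^N_\fb p(\xx_0)=0$ --- the last disjunct of $\psi$ --- while any "first-positive" case with index exceeding $N$ is unsatisfiable.

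Finally I would reassemble the pieces. The per-point dichotomy of the second paragraph, truncated via $N$ in the third, is exactly $\psi(p,\fb)$ evaluated at $\xx_0$; quantifying over all boundary points $\{p=0\wedge h>0\}$ turns the boundary characterization into $\forall\xx.\,(p(\xx)=0\wedge h(\xx)>0\rightarrow\psi(p,\fb))$, which is the claimed criterion. The two points I would be careful to nail down are the analytic implication "all derivatives zero $\Rightarrow$ identically zero" (which needs connectedness of the neighborhood of $0$) and the observation that $N_{p,\fb}$ is determined by the ideal chain of $p$ and $\fb$ alone, independently of $\xx_0$ --- precisely what permits a single uniform formula $\psi$ to work at every boundary point.
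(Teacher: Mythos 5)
Your proof is correct and follows essentially the same route as the paper's source for this theorem (the paper gives no inline proof and defers to \cite{emsoft11}): you combine the boundary-point characterization (\ref{eqn:CI-condition}) with the Taylor/Lie-derivative expansion (\ref{eqn:taylor2}) to get the first-nonzero-derivative dichotomy, and then collapse the resulting infinite disjunction to the finite formula $\psi(p,\fb)$ via the Noetherian ascending-chain argument together with the Leibniz-rule propagation of ideal membership, which is precisely how $N_{p,\fb}$ is defined and exploited in \cite{emsoft11}. No gaps worth flagging; the only cosmetic remark is that connectedness is not really needed for the ``all derivatives vanish $\Rightarrow$ $g\equiv 0$ near $0$'' step, since a real-analytic function locally equals its Taylor series.
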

\begin{proof}
  Please refer to \cite{emsoft11}. \qed
\end{proof}

Intuitively, Theorem \ref{thm:crit} means that on the boundary of $P$, up to the $N_{p,\fb}$-th order, the first non-zero higher order Lie derivative of $p$ w.r.t $\fb$ is non-negative.

The above theorem can be generalized for parametric polynomials $p(\uu,\xx)$, thus enabling us to use polynomial templates and QE to automatically discover CIs. Such a method for CI generation is relatively complete, that is, if there exists a CI in the form of the predefined template, then we are able to find one.

\begin{example}
Consider the system $(\mathbb R^2, \fb)$ from \cite{Taly09} with
$\fb\,\define\,(\dot x=1-y,\,\dot y=x)$, which has a continuous invariant $p\geq 0$ with $p\,\define\,-(-x^2-y^2+2y)^2$,  defining the circumference of a circle.

In \cite{Taly09}, sound and complete inference rules are
given for invariants that are linear, quadratic, smooth or
convex. However, it was pointed out in \cite{Taly09} that all these rules failed to prove the invariance property of $p\geq 0$, as $p$ is not linear or quadratic, nor is it smooth or convex.
Furthermore, by a simple computation we get $L_{\fb}^k p\equiv 0$ for all $k\geq 1$, so the sound but incomplete rule in \cite{Taly09,Taly11} which involves only strict inequalities over finite-order Lie derivatives is also inapplicable. However, from $L_{\fb}^1 p\equiv 0$ we get
$N_{p,\fb}=0$, and then according to Theorem \ref{thm:crit}, $p\geq 0$ can be verified since $\forall x\forall y.\,\left(-(-x^2-y^2+2y)^2=0\longrightarrow \textsf{true}\right)$
holds trivially.

Although the rule in \cite{PlatzerClarke08} can also be used to check the invariant $p\geq 0$, generally it only works on very restricted invariants. Even for linear systems like $(\mathbb R, \dot x=x)$, it cannot prove the invariant $x\geq 0$ because $\forall x.x\geq 0$ is obviously {false}, while our approach requires $\forall x. (x=0 \rightarrow \textsf{true})$ which is trivially true.

The above examples show the generality and flexibility of our approach, using
which it is possible to generate CIs in many
general cases, and hence gives more possibility to synthesize a controller based on our understandings of the kind of controllers that can be synthesized using methods in \cite{Taly11,Taly-10-reach,Sturm-Tiwari-issac11}.
\end{example}

\subsection{Heuristics for Predefining Templates}\label{sec:heuristic}
The key steps of the qualitative analysis used in \cite{Kapur97}
are as follows.
\begin{enumerate}
\item The evolution behavior (increasing or decreasing) of continuous variables in each mode
    is inferred from the differential equations (using first
    or second order derivatives);
\item  \emph{control critical} modes, at which the maximal (or minimal) value of a continuous variable is achieved, can be identified;
\item the safety requirement is imposed to obtain constraints on guards of transitions leading to control critical modes, and
\item then this information is propagated to other modes.
\end{enumerate}

Next, we illustrate how such an analysis helps
in predefining templates for the running example.
\begin{example}[Nuclear Reactor Temperature Control]\label{eg:nuclear}
Our goal is to synthesize a switching  controller for the system in Example \ref{ex:nuclear0} with the global safety requirement that the temperature of the core lies between  $510$ and $550$, i.e. $S_i\,\define\,510\leq x \leq 550$ for $i=1,2,3,4$.
\end{example}
\begin{itemize}
  \item[1)] {\bf Refine domains.} \,Using the safety requirement,
     domains $D_i$ for $i=1,2,3,4$ are refined by $D_i^s\,\define\,D_i\cap S_i$, e.g. $D_1^s\,\define\, p=0\wedge 510\leq x\leq 550$\,.
  \item[2)] {\bf Infer continuous evolutions.} Let $l_1\,\define\,x/10-6p-50=0$ be the \emph{zero-level} set of $\dot x$ and check how $x$ and $p$ evolve in each mode. For example, in $D_2^s$, $\dot x>0$ on the left of $l_1$ and $\dot x<0$ on the right; since $p$ increases from $0$ to $1$, $x$ first increases then decreases and achieves maximal value when crossing $l_1$.
  \item[3)] {\bf Identify critical control modes.} By 2), $q_2$ and $q_4$ are critical control modes at which the evolution direction of $x$ changes.
  \item[4)] {\bf Generate control points.} By 3), we can get a control point $E(5/6,550)$ at $q_2$ by taking the intersection of $l_1$ and the safety upper bound $x=550$; and $F(1/6,510)$ can be obtained similarly at $q_4$.
  \item[5)] {\bf Propagate control points.} $E$ is backward propagated to $A(0,a)$ using trajectory $\scriptstyle{\wideparen{AE}}$ at $q_2$, and then to $C(1,c)$ using trajectory $\scriptstyle{\wideparen{CA}}$ at $q_4$; similarly, by propagating $F$ we get $D$ and $B$. (See Fig.~\ref{fig:ci-templates}.)
  \item[6)] {\bf Construct templates.} For brevity, we only show how to construct $D_2'$. Intuitively, $p=0$, $p=1$, $\scriptstyle{\wideparen{AE}}$ and $\scriptstyle{\wideparen{BD}}$ forms the boundaries of $D_2'$. In order to get a semi-algebraic template, we need to fit $\scriptstyle{\wideparen{AE}}$ and $\scriptstyle{\wideparen{BD}}$ by polynomials using points $A,E$ and $B,D$ respectively. By 2), $\scriptstyle{\wideparen{AE}}$ has only one extreme point $E$ in $D_2^s$ and is tangential to $x=550$ at $E$.
      The simplest algebraic curve that can exhibit a shape similar to $\scriptstyle{\wideparen{AE}}$ is the parabola through $A,E$ opening downward with $l_2\,\define\,p=\frac{5}{6}$ the axis of symmetry.
      Therefore to minimize the degree of terms appearing in templates, we do not resort to polynomials with degree greater than 3. This parabola can be computed using the coordinates of $A,E$ as: $x-550-\frac{36}{25}(a-550)(p-\frac{5}{6})^2= 0$\,.
\end{itemize}
\begin{figure}
\begin{center}
\includegraphics[width=1.8in,height=1.5in]{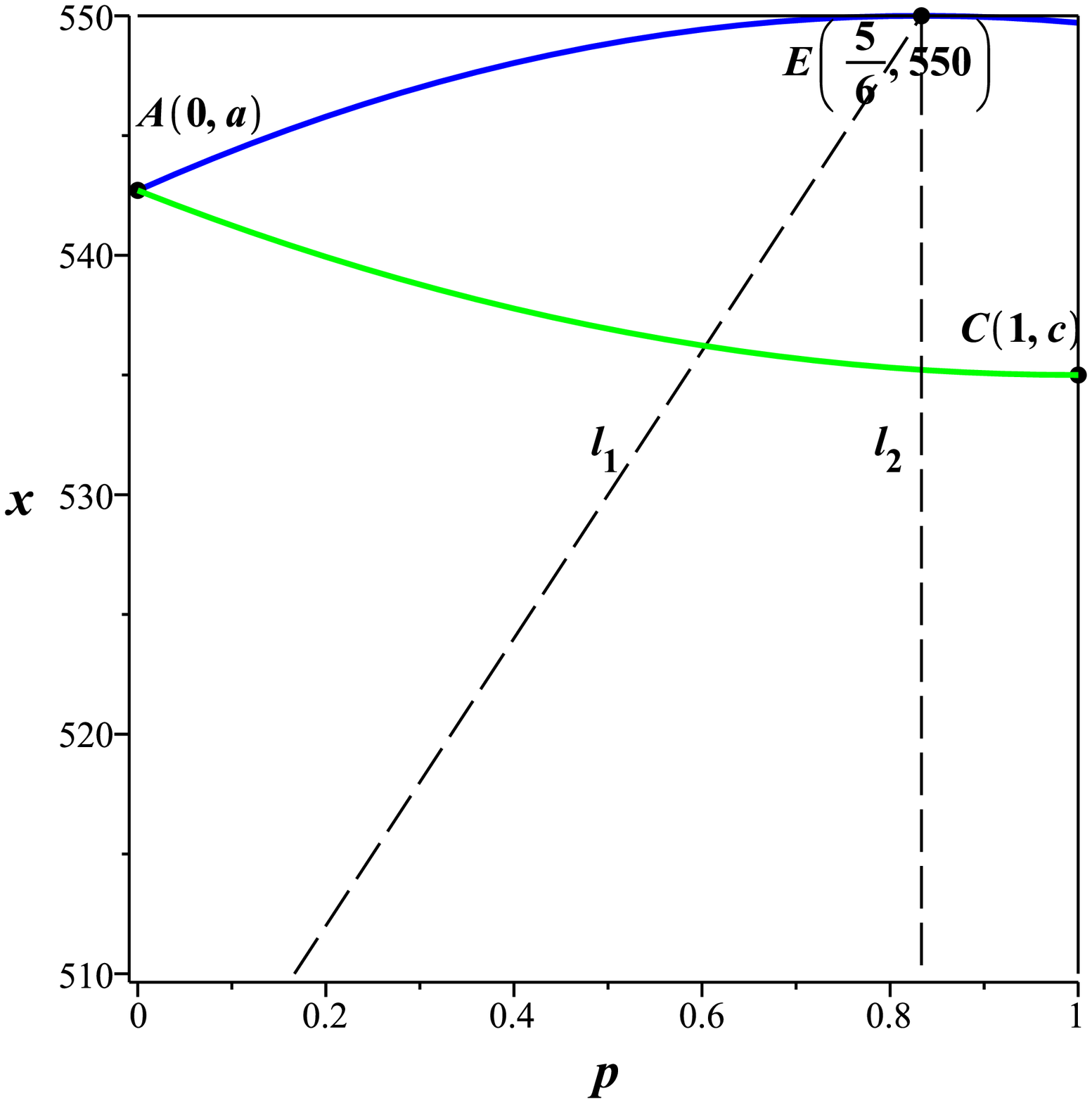}
\hspace{1cm}
\includegraphics[width=1.8in,height=1.5in]{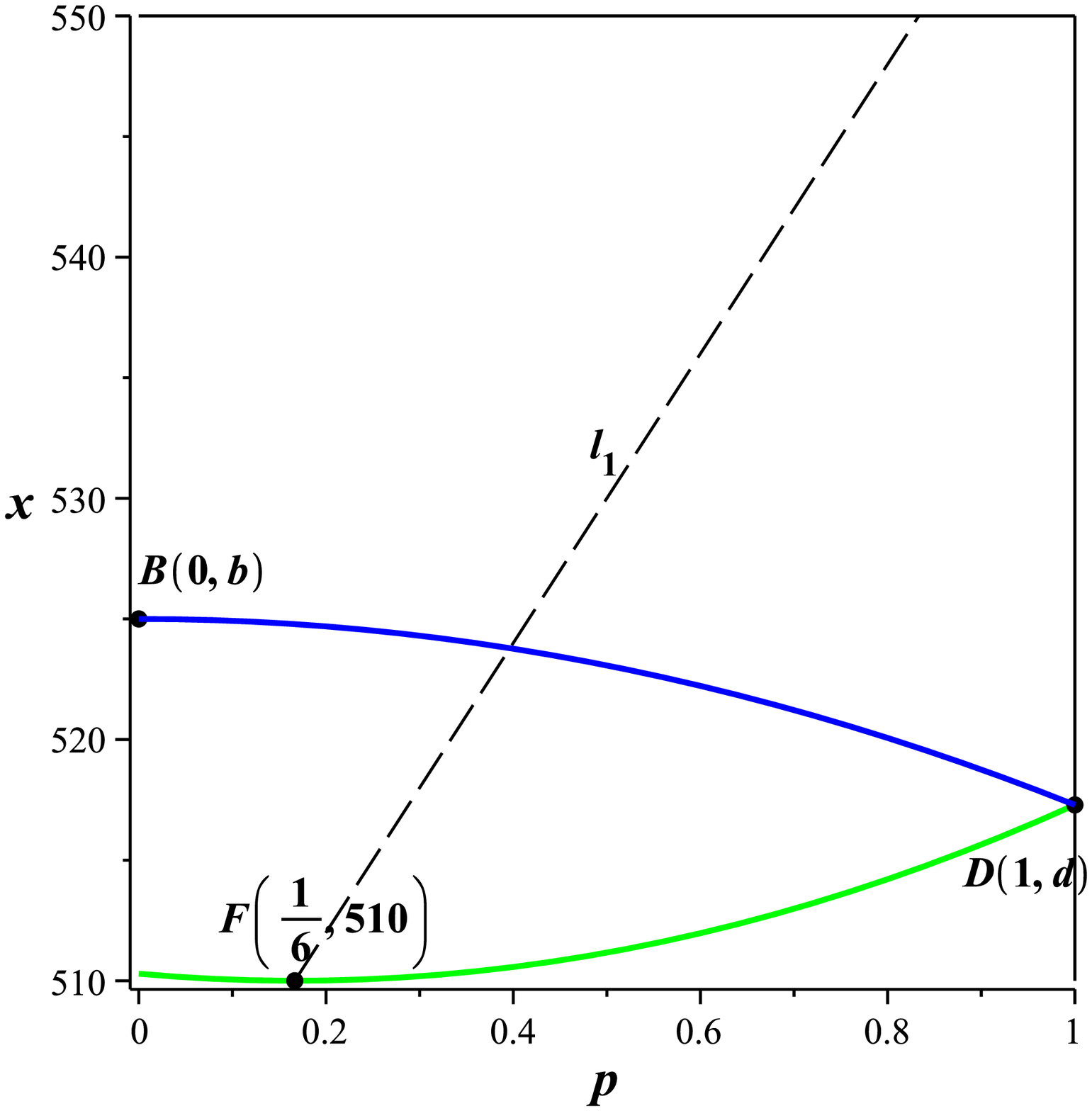}
\end{center}
\caption{Control points propagation.}\label{fig:ci-templates}
\end{figure}

Through the above analysis, we generate the following templates:
\begin{itemize}
  \item[$\bullet$] $D_1'\,\define\,\, p=0 \,\wedge\, 510\leq x\leq a$\,;
  \item[$\bullet$] $D_2'\,\define\,\, 0\leq p\leq 1 \,\wedge\, x-b\geq p(d-b) \,\wedge\, x-550-\frac{36}{25}(a-550)(p-\frac{5}{6})^2\leq 0 $\,;
  \item[$\bullet$] $D_3'\,\define\,\, p=1 \,\wedge\, d \leq x\leq 550$\,;
  \item[$\bullet$] $D_4'\,\define\,\, 0\leq p\leq 1 \,\wedge\, x-a\leq p(c-a) \,\wedge\, x-510-\frac{36}{25}(d-510)(p-\frac{1}{6})^2\geq 0 $\,,
\end{itemize}
\noindent
in which $a,b,c,d$ are parameters.
Note that without qualitative analysis, a single generic \emph{quadratic} polynomial over $p$ and $x$ would require ${{2+2} \choose 2}= 6$ parameters.

The above heuristics works well on planar systems and can also be applied to three-dimensional systems. We are further generalizing the
heuristics to cover a wider class of hybrid automata.

Based on the framework presented in Section \ref{sec:abstract-frame},
we show below how to synthesize a switching controller for the system in
Example \ref{eg:nuclear} step by step.
\begin{example}[Nuclear Reactor Temperature Control Contd.]
\begin{itemize}
\item[(s1)]
The four invariant templates are defined in Section~\ref{sec:heuristic}.
\item[(s2)] The four guards are refined by setting $G_{ij}'\,\define\,G_{ij}\cap D_{j}'$\,:
\begin{itemize}
  \item $G_{12}'\,\define\,\,p=0\,\wedge \,b\leq x\leq a$\,;
  \item $G_{23}'\,\define\,\,p=1\,\wedge \,d\leq x\leq 550$\,;
  \item $G_{34}'\,\define\,\,p=1\,\wedge\,d\leq x\leq c$\,;
  \item $G_{41}'\,\define\,\,p=0\,\wedge\,510\leq x\leq a$\,.
  \vspace{.1cm}
\end{itemize}
\item[(s3)] Using $D_i'$ and $G_{ij}'$ we can derive the synthesis condition, which is a first-order polynomial formula in the form of $\phi\,\define\,\forall x\forall p. \varphi(a,b,c,d,x,p)$. We do  not include $\phi$ here due to its big size.
\item[(s4)] By applying QE to $\phi$ we get the following solution to the parameters:
    \begin{equation}\label{eqn:res-nuclear}
    a=\frac{6575}{12} \,\wedge\,b= \frac{4135}{8}\,\wedge\, c= \frac{4345}{8}\,\wedge\,d=\frac{6145}{12}\enspace.
    \end{equation}
\item[(s5)] Instantiate $D_i'$ and $G_{ij}'$ by (\ref{eqn:res-nuclear}). It is obvious that all $D_i'$ are nonempty closed sets. According to Theorem \ref{thm:main}, we get a safe switching controller for the nuclear reactor system. The left picture in Fig.~\ref{fig:shape-inv} is an illustration of $D_2'$.
\end{itemize}
\begin{figure}
\begin{center}
\includegraphics[width=1.5in,height=1.5in]{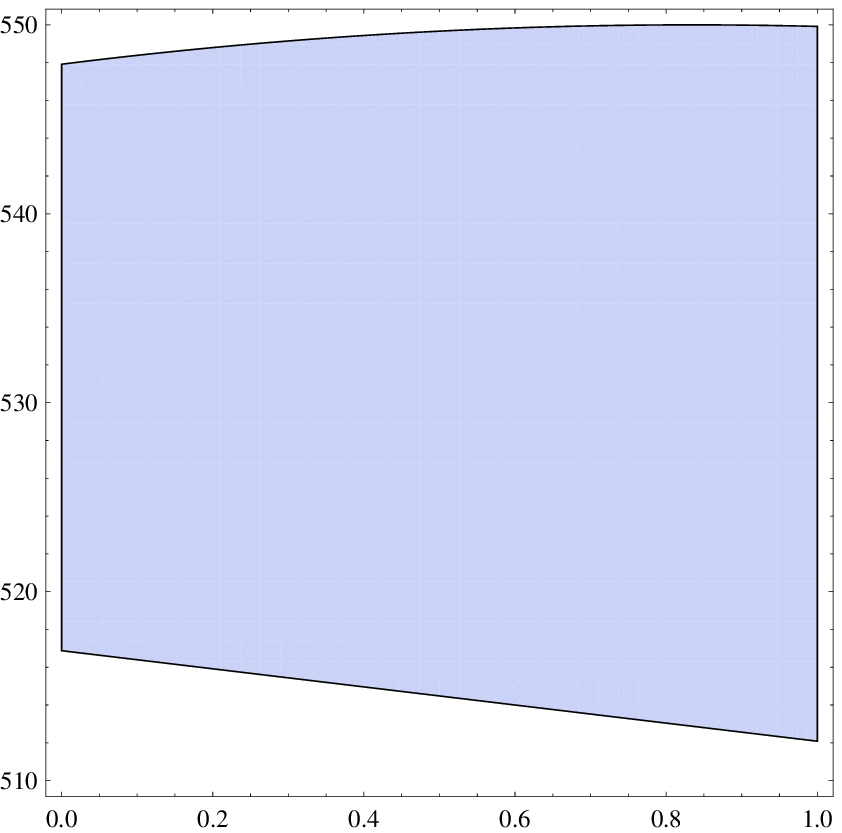}
\hspace{.1cm}
\includegraphics[width=1.5in,height=1.5in]{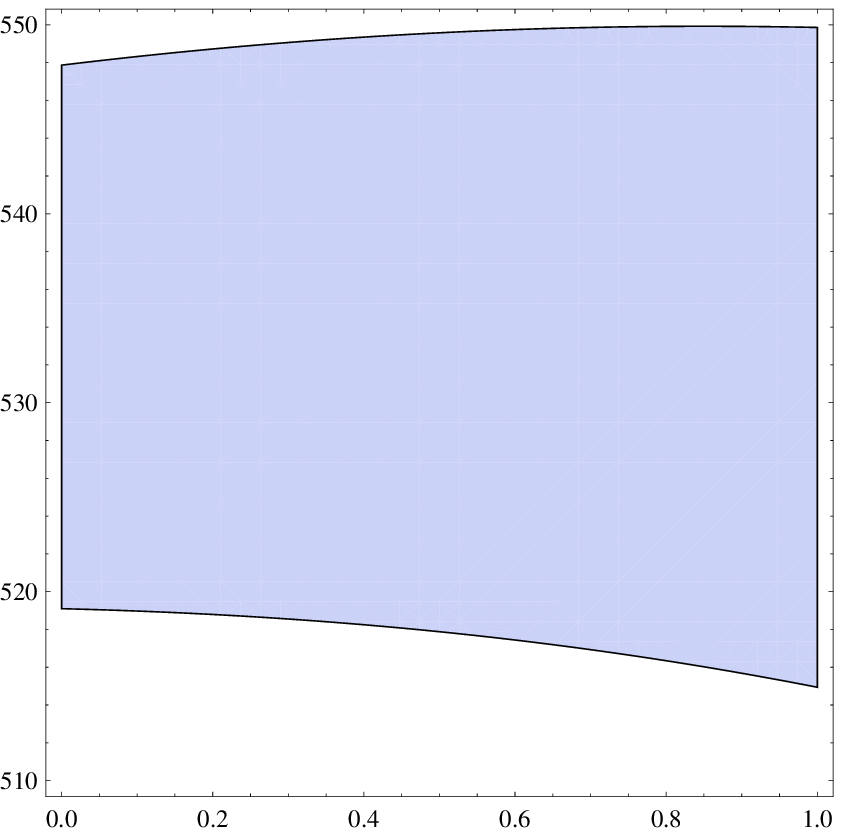}
\hspace{.1cm}
\includegraphics[width=1.5in,height=1.5in]{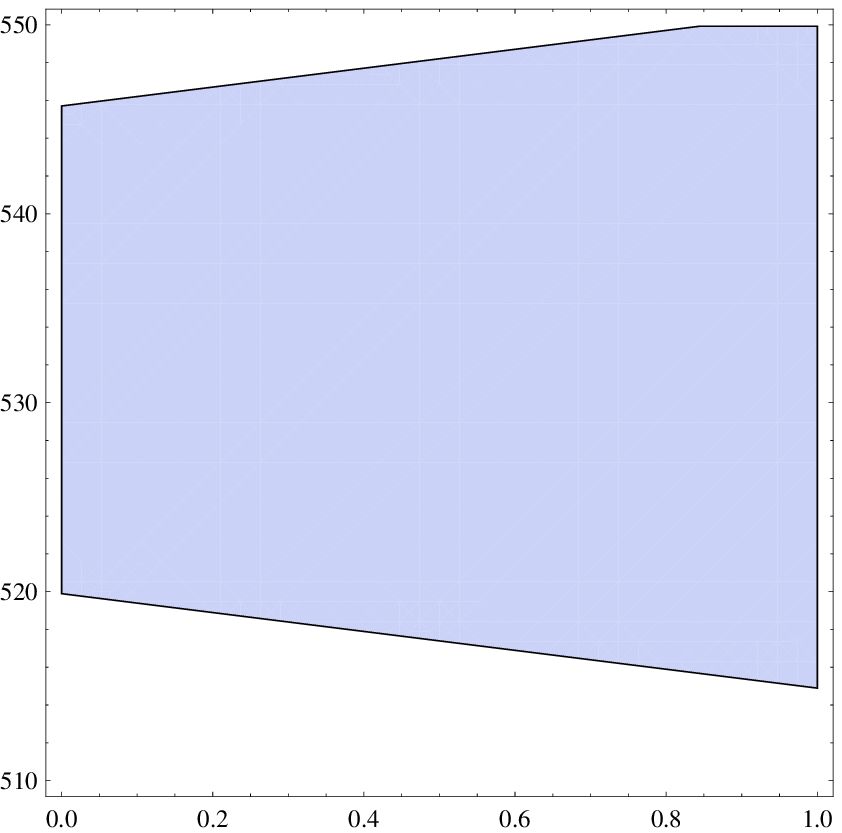}
\end{center}
\caption{Shape of synthesized continuous invariants.}\label{fig:shape-inv}
\end{figure}

In \cite{Kapur97}, an upper bound $x=547.97$ for $G_{12}$ and a lower bound $x=512.03$ for $G_{34}$ are obtained by solving the differential equations at mode $q_2$ and $q_4$ respectively.
By (\ref{eqn:res-nuclear}), the corresponding bounds generated
here are $x\leq \frac{6575}{12}=547.92$ and $x\geq
\frac{6145}{12}=512.08$.

As should be evident from the above results, in contrast to
\cite{Kapur97}, where differential equations are solved to get
closed-form solutions, we are able to get good approximate
results without requiring closed-form solutions. This indicates
that our approach should work well for hybrid automata where
differential equations for modes need not have closed form solutions.
\end{example}

\section{Synthesis by Generating CIs Numerically}
The QE-based approach crucially depends upon quantifier elimination techniques. It is well known that the complexity of a general purpose QE method over the full theory of real-closed fields
is \emph{doubly exponential} in the number of variables \cite{dh88}. Therefore it is desirable to develop heuristics to do QE more efficiently. As shown in Section \ref{sec:heuristic}, qualitative analysis helps in reducing the number of parameters in templates. Another possible way to address the issue of high computational cost is resorting to numerical methods. In this section, we will discuss the application of two such approaches to the nuclear reactor example.

\subsection{The SOS Relaxation Approach}
Let $\mathbb R[x_1,x_2,\ldots,x_n]$, or $\mathbb R[\xx]$ for short, denote the polynomial ring over variables $x_1,x_2,\ldots,x_n$ with real coefficients. A \emph{monomial} is a special polynomial in the form of $x_1^{\alpha_1}x_2^{\alpha_2}\cdots x_n^{\alpha_n}$ with $(\alpha_1,\alpha_2,\ldots,\alpha_n)\in\mathbb N^n$. Any polynomial $p(\xx)\in \mathbb R[\xx]$ of degree $d$ can be written as a linear combination of $n+d \choose d$ monomials, i.e.
$$p(\xx) = \sum_{\alpha_1+\alpha_2+\cdots+\alpha_n\leq d} c_{(\alpha_1,\alpha_2,\ldots,\alpha_n)} \cdot x_1^{\alpha_1}x_2^{\alpha_2}\cdots x_n^{\alpha_n}\enspace\,.$$
A polynomial $p$ is called an SOS (sum-of-squares) if there exist $s$ polynomials $q_1,q_2,\ldots,q_s$ s.t.
$$p=\sum_{1\leq i\leq s} q_i^2\enspace .$$
It is obvious that any SOS $p$ is non-negative, i.e. $\forall \xx\in\mathbb R^n. \,p(\xx)\geq 0\enspace .$

The basic idea of SOS relaxation is as follows: to prove that a polynomial $p$ is nonnegative, we can try to show that $p$ can be decomposed into a sum of squares, a trivially sufficient condition for non-negativity (but generally not necessary); similarly, to prove $p\geq 0$ on the semi-algebraic set $q\geq 0$, it is sufficient to find two SOS $r_1,r_2$ such that
$p=r_1+r_2\cdot q$.

SOS relaxation is attractive because the searching for SOS decomposition can be reduced to a semi-definite programming (SDP) problem according to the following equivalence \cite{Parrilo-thesis}:
\begin{quote}
  A polynomial $p$ of degree $2d$ is an SOS if and only if there exists a semi-definite matrix $Q$ such that $p=\mathbf q\cdot Q \mathbf\cdot \mathbf q^{T}$, where $\mathbf q$ is a ${n+d \choose d}$-dimensional row vector of monomials with degree $\leq d$.
\end{quote}
SDP is a convex programming that is solvable in \emph{polynomial} time using numerical methods such as the interior point method \cite{Boyd96}. Therefore the SOS computation is a tractable problem.

We now show how SOS can be related to CI generation. Let $p\geq 0$ be a parametric template defined for the system $(h>0,\fb)$. By Theorem \ref{thm:crit}, a sufficient condition for $p\geq 0$ to be a CI of $(h>0,\fb)$ is
$$\forall \xx. \big(p(\xx)=0\wedge h(\xx)>0\longrightarrow L_{\fb}^1 p (\xx)>0\big)\,,$$
which can be further strengthened to
\begin{equation}\label{eqn:relaxed-ci}
\forall \xx. \big(h(\xx)>0\longrightarrow L_{\fb}^1 p (\xx)>0\big)\,.
\end{equation}
Using SOS relaxation, a sufficient condition for (\ref{eqn:relaxed-ci})
is
\begin{equation}\label{eqn:sos-relax}
  L^1_{\fb}p= s_1+s_2\cdot h + \varepsilon\enspace,
\end{equation}
where $s_1,s_2$ are SOS and $\varepsilon$ is a positive constant. Solve (\ref{eqn:sos-relax}) for the parameters appearing in $p$ and then we can get a CI $p\geq 0$ of $(h>0,\fb)$.

For the nuclear reactor example, we define two general \emph{quartic} templates
$$p\geq 0\wedge p\leq 1\wedge \sum_{\alpha_1+\alpha_2\leq 4} c_{(\alpha_1,\alpha_2)}\cdot p^{\alpha_1}x^{\alpha_2}\leq 0$$
and
$$p\geq 0\wedge p\leq 1\wedge \sum_{\alpha_1+\alpha_2\leq 4} d_{(\alpha_1,\alpha_2)}\cdot p^{\alpha_1}x^{\alpha_2}\leq 0$$
for mode $q_2$ and $q_4$ respectively. Using the SOS relaxation techniques discussed above, the following refined domains are obtained:
\begin{itemize}
  \item[$\bullet$] $D_1'\,\define\,\, p=0 \,\wedge\, 510\leq x\leq 547.85$\,;
  \item[$\bullet$] $D_2'\,\define\,\, 0\leq p\leq 1 \,\wedge\, 34468.9 - 9941.89 p + 1114.38 p^2 + 67.3261 p^3 + 0.925 p^4 - 129.316 x + 37.7294 p x - 4.9669 p^2 x - 0.1303 p^3 x + 0.1212 x^2 - 0.0358 p x^2 + 0.0054 p^2 x^2 \leq 0$\,;
  \item[$\bullet$] $D_3'\,\define\,\, p=1 \,\wedge\, 512.09 \leq x\leq 550$\,;
  \item[$\bullet$] $D_4'\,\define\,\, 0\leq p\leq 1 \,\wedge\, 46082.8 + 8787.98 p + 1473.0 p^2 + 93.8933 p^3 + 1.635 p^4 - 174.456 x - 34.1747 p x - 4.7397 p^2 x - 0.1829 p^3 x + 0.1649 x^2 +  0.0332 p x^2 + 0.0037 p^2 x^2 \leq 0$\,.
\end{itemize}
The picture in the middle of Fig.~\ref{fig:shape-inv} illustrates the synthesized $D_2'$.

\subsection{The Template Polyhedra Approach}
Polyhedral sets are a popular family of (positive) invariants of linear (continuous or discrete) systems \cite{Blanchini99}.
A \emph{convex polyhedron} in $\mathbb R^n$ can be represented using linear inequality constraints as $Q \xx \leq \mathbf \rho$, where $Q\in \mathbb R^{r\times n}$ is an $r\times n$ matrix, and $\xx\in\mathbb R^{n\times1}, \mathbf \rho\in\mathbb R^{r\times 1}$ are column vectors.

Given a linear continuous dynamical system $\dot \xx=A\xx$ with $A\in\mathbb R^{n\times n}$, the following result about (positive) polyhedral invariant set is established in \cite{Castelan93}.
\begin{proposition}\label{prop:polyhera-pinv}
  The polyhedron $Q \xx \leq \mathbf \rho$ is a positive invariant set of $\dot \xx=A \xx$ if and only if there exists an essentially non-negative{\footnote{A square matrix is \emph{essentially non-negative} if all its entries are non-negative except for those on the diagonal. Besides, given a matrix $M$, in this paper the notation $M\geq 0$, $M>0$ and $M=0$ should be interpreted entry-wisely.}} matrix $H\in \mathbb R^{r\times r}$ satisfying $HQ=QA$ and $H\rho\leq 0$.
\end{proposition}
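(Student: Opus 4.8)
The plan is to reduce positive invariance of the polyhedron $P\define\{\xx\mid Q\xx\le\rho\}$ to a condition on how the vector field $A\xx$ behaves on each face, and then to pass between that geometric condition and the algebraic relation $HQ=QA,\ H\rho\le0$ by a duality argument in one direction and a variation-of-constants argument in the other. Write $q_1^{T},\dots,q_r^{T}$ for the rows of $Q$, so the $i$-th facet is $\{\xx\in P\mid q_i^{T}\xx=\rho_i\}$, and let $\xx(t)=e^{At}\xx_0$ denote the trajectory through $\xx_0$. The pivotal reformulation I would establish first is the \emph{subtangentiality condition}: $P$ is positively invariant if and only if for every $i$ and every $\xx\in P$ with $q_i^{T}\xx=\rho_i$ one has $q_i^{T}A\xx\le0$. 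The necessity of this is immediate, since for such $\xx$ the function $\varphi(t)\define q_i^{T}\xx(t)$ satisfies $\varphi(t)\le\rho_i$ for $t\ge0$ with equality at $t=0$, so its right derivative $\varphi'(0)=q_i^{T}A\xx$ must be $\le0$.

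For the ``only if'' direction I would convert this facewise condition into the matrix $H$ row by row using the affine Farkas lemma. Fix $i$. Invariance gives the valid implication: whenever $Q\xx\le\rho$ and $-q_i^{T}\xx\le-\rho_i$, then $q_i^{T}A\xx\le0$. Farkas' lemma (theorem of alternatives for linear inequalities) then supplies multipliers $\lambda\ge0$ in $\mathbb R^{r}$ for the constraints $Q\xx\le\rho$ and $\mu\ge0$ for the extra constraint $-q_i^{T}\xx\le-\rho_i$ such that $q_i^{T}A=\sum_{j}\lambda_j q_j^{T}-\mu q_i^{T}$ and $\sum_{j}\lambda_j\rho_j-\mu\rho_i\le0$. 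Setting $H_{ij}=\lambda_j$ for $j\neq i$ and $H_{ii}=\lambda_i-\mu$ gives exactly $(HQ)_i=(QA)_i$ and $(H\rho)_i\le0$, with all off-diagonal entries non-negative. The point is that the multiplier $\mu$ of the extra constraint is absorbed into the \emph{unrestricted} diagonal entry, which is precisely why \emph{essential} non-negativity, rather than full non-negativity, is the correct hypothesis. Assembling the $r$ rows produces $H$.

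For the ``if'' direction, given such an $H$ I would argue directly by the variation-of-constants formula. Put $V(t)\define Q\xx(t)-\rho$, so $\xx_0\in P$ is equivalent to $V(0)\le0$. Differentiating and using $QA=HQ$ gives the \emph{affine} linear ODE $\dot V=QA\xx=HQ\xx=H(V+\rho)=HV+H\rho$, whence $V(t)=e^{Ht}V(0)+\int_{0}^{t}e^{H(t-s)}H\rho\,\ud s$. The crucial fact is that an essentially non-negative (Metzler) matrix generates a non-negative semigroup, i.e. $e^{Ht}\ge0$ entrywise for $t\ge0$; this follows by choosing $c$ with $H+cI\ge0$ and writing $e^{Ht}=e^{-ct}e^{(H+cI)t}\ge0$. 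Since $V(0)\le0$ and $H\rho\le0$, both terms in the formula are entrywise $\le0$, so $V(t)\le0$, i.e. $\xx(t)\in P$ for all $t\ge0$, establishing positive invariance.

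The main obstacle is the ``only if'' direction, where two points need care. First, the Farkas step presumes a well-behaved premise, so the degenerate cases of an empty or lower-dimensional $i$-th face and of redundant constraints must be handled: when the face is empty the implication holds vacuously and one must still extract a usable row from the infeasibility branch of the alternative. Second, the non-negativity of $e^{Ht}$ for Metzler $H$, which drives the ``if'' direction, should be isolated as a lemma rather than taken for granted. Once the subtangentiality reformulation and these two facts are in place, the remaining work is the routine index bookkeeping indicated above.
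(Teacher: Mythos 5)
You cannot be checked against an in-paper proof here, because the paper never proves Proposition~\ref{prop:polyhera-pinv}: it is quoted from \cite{Castelan93}, and the only Farkas-style argument the paper actually carries out is the proof of Proposition~\ref{prop:my-polyhedra} in Appendix~\ref{app:b}. Your necessity (``only if'') direction is essentially that same technique: restate invariance facet-by-facet as a valid linear implication, apply the affine form of Farkas' lemma (the paper's Lemma~\ref{lem:farkas}), and assemble $H$ row by row, absorbing the multiplier of the active-facet constraint into the unconstrained diagonal entry --- the paper's proof does exactly this with its coefficient $\gamma_i^i$. Your sufficiency (``if'') direction --- $V\define Q\xx-\rho$ satisfies $\dot V = HV+H\rho$, variation of constants, and $e^{Ht}\ge 0$ entrywise for essentially non-negative $H$ --- is correct, self-contained, and has no counterpart in the paper, which only ever needs sufficiency in the generalized (affine, open-domain) setting of Proposition~\ref{prop:my-polyhedra}. (Your claimed ``iff'' subtangentiality reformulation is only half-proved, but harmlessly so: you never use the Nagumo-type sufficiency half.)

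The genuine gap is your closing claim that the degenerate cases amount to ``routine index bookkeeping''. They do not. If the polyhedron is empty --- which is vacuously positively invariant --- the proposition as literally stated is false, so no bookkeeping can close the argument: take $Q$ with rows $(1,0)$ and $(-1,0)$, $\rho=(-1,-1)^T$, so $Q\xx\le\rho$ reads $x_1\le-1\wedge x_1\ge 1$, and $A=\left(\begin{smallmatrix}0&1\\0&0\end{smallmatrix}\right)$; the rows of $QA$ are $(0,1)$ and $(0,-1)$, which lie outside the row space of $Q$, while every row of $HQ$ lies inside it, so no matrix $H$ whatsoever satisfies $HQ=QA$. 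Nonemptiness is therefore a standing hypothesis (implicit in \cite{Castelan93}) that your proof must assume rather than recover. Furthermore, for nonempty $P$ with a redundant constraint $i$ whose face is empty, your proposed fix --- ``extract a usable row from the infeasibility branch of the alternative'' --- cannot work as stated: that branch only certifies $\mu q_i^T=\lambda^TQ$ and $\lambda^T\rho<\mu\rho_i$ with $\lambda\ge0$, $\mu\ge0$, which carries no information about $A$ at all, whereas row $i$ of $H$ must reproduce $q_i^TA$. A correct repair exists but needs a further idea: first build rows $\hat h_j$ for an irredundant subdescription $J$ of $P$ (all of whose faces are nonempty, so your Farkas step applies), write $q_i^T=\lambda^TQ$ with $\lambda\ge0$ supported on $J$ and $\lambda^T\rho\le\rho_i$, take the candidate row $\sum_{j\in J}\lambda_j\hat h_j^T$ (which reproduces $q_i^TA$ and has non-positive $\rho$-product), and then add $s(\lambda^T-e_i^T)$ for sufficiently large $s\ge0$: this term annihilates $Q$, does not increase the $\rho$-product, and repairs the negative off-diagonal entries caused by negative diagonals $\hat h_{jj}$ with $\lambda_j>0$. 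Without this (or an explicit assumption that every face is nonempty), your necessity direction is incomplete.
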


By simply applying the famous \emph{Farkas' lemma} \cite{Tiwari08}, we can generalize Proposition \ref{prop:polyhera-pinv} and give a sufficient condition for polyhedral CIs of linear dynamics with open polyhedral domain.

\begin{proposition} \label{prop:my-polyhedra}
Let $\fb\,\define\,A\xx + \mathbf b$ and $D\,\define\,\mathbf c\mathbf x < a$, where $a\in\mathbb R$, $\mathbf b\in \mathbb R^{n\times 1}$ is a column vector, and $\mathbf c\in \mathbb R^{1\times n}$ is a row vector. Then the polyhedron $Q \xx \leq \mathbf \rho$ is a CI of the system
$(D,\fb)$ if there exist essentially non-negative matrix $H\in\mathbb R^{r\times r}$, and non-negative column vectors $\eta\geq 0,\xi\geq 0,\lambda\geq 0$ in $\mathbb R^{r\times 1}$ such that
\begin{itemize}
  \item[(1)] $HQ + \xi \mathbf c - {\sf diag}(\lambda)QA = 0$\enspace ;
  \item[(2)] $H\rho + \eta + \xi a + {\sf diag}(\lambda)Q\mathbf b = 0$ \enspace ;
  \item[(3)] $\xi+\eta>0$\enspace ,
\end{itemize}
where {\sf diag}$(\lambda)$ denotes the $r\times r$ diagonal matrix with the main diagonal $\lambda$.
\end{proposition}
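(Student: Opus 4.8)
The plan is to reduce the invariance of the polyhedron to a finite family of \emph{face} conditions and then discharge each of them by reading the matrix equations (1) and (2) one row at a time. Write $Q_i$ and $H_i$ for the $i$-th rows of $Q$ and $H$, and $\rho_i,\xi_i,\eta_i,\lambda_i,h_{ij}$ for the corresponding scalar entries. Since $P\define\{Q\xx\le\rho\}$ is a closed convex polyhedron and $\fb$ is affine (hence Lipschitz), the same sub-tangentiality principle underlying Proposition \ref{prop:polyhera-pinv} (a Nagumo-type criterion for convex sets) says the flow cannot leave $P$ as long as, at every boundary point, the field points inward or tangentially across each active constraint; because the domain here is the \emph{open} half-space $D\define\{\mathbf c\xx<a\}$, Definition \ref{dfn:DI} only obliges us to check this while the trajectory stays in $D$. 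Thus it suffices to establish, for every $i$, the face condition
\begin{equation}\label{eqn:face}
Q_i\xx=\rho_i \,\wedge\, Q\xx\le\rho \,\wedge\, \mathbf c\xx<a \;\longrightarrow\; L^1_{\fb}(\rho_i-Q_i\xx)\ge 0,
\end{equation}
where $L^1_{\fb}(\rho_i-Q_i\xx)=-Q_i(A\xx+\mathbf b)$ is the first Lie derivative of the $i$-th defining function along $\fb$. This is exactly the Lie-derivative requirement of Theorem \ref{thm:crit} applied to each bounding half-space.

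The engine of the argument is a single identity extracted from (1) and (2). Row $i$ of (1) reads $\lambda_iQ_iA=H_iQ+\xi_i\mathbf c$ and row $i$ of (2) reads $\lambda_iQ_i\mathbf b=-H_i\rho-\eta_i-\xi_i a$; multiplying the first on the right by $\xx$ and adding the second gives, for all $\xx$,
\begin{equation}\label{eqn:master}
\lambda_i\bigl(Q_iA\xx+Q_i\mathbf b\bigr)=H_i(Q\xx-\rho)+\xi_i(\mathbf c\xx-a)-\eta_i.
\end{equation}
Conceptually, (1)--(3) are nothing but a Farkas certificate for the implications \eqref{eqn:face}: the off-diagonal $h_{ij}$ are the multipliers of the inequalities $\rho_j-Q_j\xx\ge0$, $\xi_i$ is the multiplier of the domain inequality $a-\mathbf c\xx\ge0$, $\eta_i$ is a slack, and the \emph{free-signed} diagonal $h_{ii}$ is the multiplier of the equality $Q_i\xx=\rho_i$ — which is precisely why $H$ need only be \emph{essentially} non-negative. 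For sufficiency I only need the trivial direction of Farkas, i.e. a direct verification from \eqref{eqn:master}, which I split on the sign of $\lambda_i$.

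If $\lambda_i>0$, I evaluate the right-hand side of \eqref{eqn:master} at any point of the face, i.e. with $Q_i\xx=\rho_i$, $Q\xx\le\rho$ and $\mathbf c\xx<a$. The diagonal term $h_{ii}(Q_i\xx-\rho_i)$ vanishes on the face, so the sign of $h_{ii}$ is irrelevant; each off-diagonal term $h_{ij}(Q_j\xx-\rho_j)$ with $j\ne i$ is a product of a non-negative entry and a non-positive factor; $\xi_i(\mathbf c\xx-a)\le0$ since $\mathbf c\xx<a$ and $\xi_i\ge0$; and $-\eta_i\le0$. Hence the right-hand side is $\le0$, and dividing by $\lambda_i>0$ yields $Q_i(A\xx+\mathbf b)\le0$, which is exactly \eqref{eqn:face}.

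The case $\lambda_i=0$ is where condition (3) earns its keep, and it is the step I expect to require the most care. Here \eqref{eqn:master} degenerates to $0=H_i(Q\xx-\rho)+\xi_i(\mathbf c\xx-a)-\eta_i$. Suppose, for contradiction, that the face still meets $P\cap D$, say at $\xx$ with $Q_i\xx=\rho_i$, $Q\xx\le\rho$, $\mathbf c\xx<a$. By the estimates just used, the three summands are each $\le0$ (the diagonal term again vanishing); since they sum to $0$, each must vanish, so $\xi_i(\mathbf c\xx-a)=0$, and $\mathbf c\xx-a<0$ forces $\xi_i=0$, while $\eta_i=0$ as well. This contradicts $\xi_i+\eta_i>0$ from (3). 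Therefore the face $i$ never meets $P\cap D$, and \eqref{eqn:face} holds vacuously there. Since \eqref{eqn:face} holds for every $i$, the sub-tangentiality criterion is satisfied throughout $\partial P\cap D$, and so $P$ is a CI of $(D,\fb)$. The two delicate points to get right in the write-up are the reduction to \eqref{eqn:face} for an \emph{open} polyhedral domain (invoking the convex-invariance theory behind Proposition \ref{prop:polyhera-pinv}) and the vacuity argument in the $\lambda_i=0$ case; the remaining algebra is routine.
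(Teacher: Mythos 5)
Your proof is correct, and its algebraic core --- reading conditions (1)--(3) row by row as a multiplier certificate for per-face implications --- is exactly the computation in the paper's proof; what genuinely differs is the reduction that makes the face conditions sufficient, and how the certificate is discharged. The paper reduces invariance to the \emph{strict} conditions $Q\xx\le\rho\,\wedge\,\mathbf q_i\xx=\rho_i\,\wedge\,\mathbf c\xx<a\Longrightarrow\mathbf q_i(A\xx+\mathbf b)<0$ by citing the generalization of Theorem~\ref{thm:crit} in \cite{emsoft11}, and then applies Farkas' lemma (Lemma~\ref{lem:farkas}) face by face, so that (1)--(3) fall out as the matrix form of the multipliers --- the free-signed diagonal of $H$ coming from the face equality, exactly as you observe; only the trivial direction of Farkas is needed, and the degenerate case you treat separately ($\lambda_i=0$) is silently absorbed into unsatisfiability. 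You instead reduce to the \emph{non-strict} tangent-cone conditions justified by a relativized Nagumo principle for closed convex sets, and then verify them directly through your row-wise identity with the $\lambda_i>0$ versus $\lambda_i=0$ case split; the vacuity argument in the degenerate case is correct. Two caveats. First, your aside that the non-strict condition ``is exactly the Lie-derivative requirement of Theorem~\ref{thm:crit}'' is inaccurate: $\psi(p,\fb)$ is a disjunction over higher-order Lie derivatives, and the non-strict first-order condition is in general \emph{not} sufficient for invariance (for $\dot x=1$ and $p=-x^2$ it holds on $\partial P$, yet $P=\{0\}$ is not invariant); it suffices in your setting only because of convexity, i.e.\ precisely by Nagumo. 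So your proof really does rest on the relativized Nagumo claim for an open domain $D$, which is true but is not literally Proposition~\ref{prop:polyhera-pinv} (stated for $D=\mathbb R^n$ and an ``if and only if'' under linear dynamics) and therefore needs its own proof or a proper citation --- that is the price of your route, where the paper instead leans on its prior work \cite{emsoft11}. Second, your own estimates already give the strict inequality when $\lambda_i>0$: on $P\cap D$ the right-hand side of your identity is at most $\xi_i(\mathbf c\xx-a)-\eta_i$, which is strictly negative because $\xi_i+\eta_i>0$ and $\mathbf c\xx<a$; so one extra line lets you conclude the paper's strict face condition and dispense with Nagumo (and convex-analysis machinery) altogether, collapsing your argument onto the paper's.
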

\begin{proof}
  Please refer to Appendix \ref{app:b}. \qed
\end{proof}

Proposition \ref{prop:my-polyhedra} serves as the basis of automatic generation of polyhedral CIs for linear systems. To reduce the number of parameters in a polyhedral template, we propose the use of \emph{template polyhedra}. The idea is to partly fix the shape of the invariant polyhedra by fixing the orientation of their facets. Formally, a template polyhedron is of the form $Q\xx\leq \rho$ where $Q$ is fixed a priori and $\rho$ is to be determined. Any instantiation of $\rho$ from $\mathbb R^{r\times 1}$ produces a concrete polyhedron. In this paper, since the system is planar, we choose $Q$ in such a way that its row vectors form a set of uniformly distributed directions on a unit circle, i.e.
$$\mathbf q_i=\big(\cos(\frac{i-1}{r}2\pi), \sin(\frac{i-1}{r}2\pi)\big)$$ for $1\leq i\leq r$, where $\mathbf q_i$ denotes the $i$-th row of $Q$. It is easy to see that $Q\xx\leq \rho$ is just a rectangle when $r=4$, and an octagon when $r=8$.

In order to determine $\rho$, we have to solve the constraints derived from Proposition \ref{prop:my-polyhedra}. Note that since both $H$ and $\rho$ are indeterminate, the constraint (2) becomes \emph{bilinear}, making the problem NP-hard \cite{VanAntwerp2000} to solve. It is however still more tractable using modern BMI (bilinear matrix inequality) solvers compared to QE. The details of applying numerical solvers will be discussed in the Conclusion part.

Using octagonal templates\footnote{To reduce the number of facets needed in the template, we scaled the variable $x$ by a factor of $0.2$, i.e. let $x=5x'$, and rescaled the generated invariants by $5$.} for mode $q_2$ and $q_4$ in the nuclear reactor example, we obtain the following refined domains.
\begin{itemize}
  \item[$\bullet$] $D_1'\,\define\,\, p=0 \,\wedge\, 510\leq x\leq 545.50$\,;
  \item[$\bullet$] $D_2'\,\define\,\,Q (p,x)^T\leq \rho_1$ with
    \begin{displaymath}
      Q =
      \scriptsize{
      \left( \begin{array}{cccccccc}
5.0000 \,& 3.5355 \,& 0.0000 \,& -3.5355\, & -5.0000 \,& -3.5355\, & -0.0000 \,& 3.5355\\
0.0000 & 0.7071 & 1.0000 & 0.7071  & 0.0000 & -0.7071 & -1.0000 & -0.7071
     \end{array} \right)^T
     }
    \end{displaymath}
    and
      \begin{displaymath}
      \rho_1 =
      \scriptsize{
      \left( \begin{array}{cccccccc}
      5.0000 \,& 392.4429 \,& 549.9276 \,& 385.8688\, & 0.0000\, & -367.6169\, & -514.5244\, & -360.2745
     \end{array} \right)^T
     } \,;
    \end{displaymath}
  \item[$\bullet$] $D_3'\,\define\,\, p=1 \,\wedge\, 514.50 \leq x\leq 550$\,;
  \item[$\bullet$] $D_4'\,\define\,\,Q (p,x)^T\leq \rho_2$ with the $Q$ in $D_2'$ and
      \begin{displaymath}
      \rho_2 =
      \scriptsize{
      \left( \begin{array}{cccccccc}
      5.0000\, & 384.1267\, & 545.5548 \,& 384.7484\, & 0.0000\, & -360.6299\, & -510.1431\, & -360.1948
     \end{array} \right)^T
     } \,.
    \end{displaymath}
\end{itemize}
In Fig.~\ref{fig:shape-inv}, the picture on the right  illustrates the synthesized $D_2'$.

\section{Conclusion and Discussion}
We have extended a template-based approach for
synthesizing switching controllers for semi-algebraic hybrid
systems by combining symbolic invariant generation methods using quantifier elimination with qualitative methods to determine the likely shape of invariants. We have also investigated the application of numerical methods to gain high level of scalability and automation. A summary comparison of the three proposed approaches, i.e. the QE-based, SOS-relaxation and template-polyhedra approaches, can be given in the following aspects.

\begin{itemize}
  \item {\bf Applicability:} the QE-based approach can be applied to any semi-algebraic system; SOS relaxation techniques can be applied to semi-algebraic systems for which SOS encoding is possible; the template-polyhedra approach is only applicable to linear systems.
  \item {\bf Design of Templates:} the QE-based approach demands much heuristics in determining templates, while the other two need little human effort.
  \item {\bf Relative Completeness:} only the QE-based approach is relative complete w.r.t. the predefined family of templates, but we believe that the template-polyhedra approach can be made relatively complete by improving Prop.~\ref{prop:my-polyhedra}.
  \item {\bf Quality of Controllers:} the QE-based and SOS-relaxation approaches can generate arbitrary (non-convex) semi-algebraic invariants, while the template-polyhedra approach can only generate convex polyhedral invariants; for the nuclear reactor example, we can see from Fig.~\ref{fig:shape-inv} that the QE-based approach produced larger refined domains and transition guards, but such superiority is not a necessity and relies greatly on the quality of heuristics.
  \item {\bf Computational Cost:}  for the QE-based approach, we have used the algebraic tools Redlog \cite{Redlog} and QEPCAD (the {\sf slfq} function) \cite{qepcad} to perform QE; for the numerical approaches, we use the MATLAB optimization toolbox YALMIP \cite{YALMIP,Lofberg2009} as a high-level modeling environment and the interfaced external solvers SeDuMi \cite{SeDuMi} and PENBMI \cite{Penbmi} (the TOMLAB \cite{Tomlab} version) to solve the underlying SDP and BMI problems respectively. Table \ref{tbl:timing} shows the time cost of three approaches applied to the nuclear reactor (NR) example as well as a thermostat (TS) example from \cite{Jha09}. All computations are done on a desktop with a 2.66\,GHz CPU and 4\,GB memory.
\begin{table}
\begin{center}
  \caption{Templates and time cost of three controller synthesis approaches.}\label{tbl:timing}
\begin{tabular}{|c|c|c|c|c|}
  \hline
  \multicolumn{2}{|c|}{\,Approach\,} &\, QE-based\, &\, SOS-relaxation \,& \,template-polyhedra \,\\
  \hline
  \multicolumn{2}{|c|}{Tool} & \,Redlog + {\sf slfq }\, & \,YALMIP + SeDuMi \, & \,YALMIP + PENBMI\,\\
  \hline
   \multirow{2}{*}{\,Template\,} & \,NR\, &\, quadratic, {\scriptsize \textsf{\#PARMS}}$\,=4$ \, & generic quartic & 8 facets \\
  \cline{2-5}
  & TS & quadratic, {\scriptsize \textsf{\#PARMS}}$\,=2$ & generic quartic & 10/12 facets\\
  \hline
  \,Time\, & \,NR\, & 12.663 & 1.969 & 0.578\\
  \cline{2-5}
  (sec) & TS & 7.092 & 1.609 & 1.437\\
  \hline
\end{tabular}
\end{center}
\end{table}
We can see that for these two examples the QE-based approach is consistently more expensive in time compared to numerical approaches.
  \item {\bf Soundness:} the QE-based approach is exact while the other two approaches suffer from numerical errors which would cause the synthesis of unsafe controllers. The justification for use of numerical methods is that verification is much easier than synthesis. For example, we have verified posteriorly and symbolically the controllers synthesized by both numerical approaches in this paper. We could also directly encode some tolerance of numerical errors into the synthesis constraints to increase robustness and reduce the risk of synthesizing bad controllers.
\end{itemize}

Our analysis of a nuclear reactor example suggests the effectiveness of all three proposed approaches. We are currently experimenting with these (and more other) methods on more complex examples. We believe that there exists no single method that can solve all the problems. A practical way is to select the most suitable one(s) for any specific problem.

Although the focus of this paper is on the switching controller synthesis problem subject to safety requirements, we plan to extend the proposed approach for reachability and/or optimality requirements as well, by incorporating our previous results on \emph{asymptotic stability} analysis \cite{mics2012} and a case study in optimal control \cite{fm2012}.



\subsubsection{Acknowledgements.} We thank Dr. Jiang Liu for his contribution to our previous joint work on invariant generation.
We also thank Dr.~Matthias Horbach and Mr.~ThanhVu Nguyen for their valuable comments on our old drafts.


\newpage
\appendix
\section{Proof of Theorem \ref{thm:main}}\label{app:a}
We need the following definitions \cite{Sastry00} to prove Theorem \ref{thm:main}.

\begin{definition}[Hybrid Time Set] A hybrid time set is a sequence of intervals $\tau=\{I_i\}_{i=0}^N$ ($N$ can be $\infty$) such that:
\begin{itemize}
  \item[$\bullet$]$I_i=[\tau_i,\tau_i']$ with $\tau_i\leq\tau_i'=\tau_{i+1}$ for all $i<N$;
  \item[$\bullet$]if $N<\infty$, then $I_N=[\tau_N,\tau_N'\rg$ is a right-closed or right-open nonempty interval ($\tau_N'$ may be $\infty$);
\item[$\bullet$]$\tau_0=0$\enspace .
\end{itemize}
\end{definition}

Given a hybrid time set, let $\<\tau\rg=N$ and $\|\tau\|=\sum_{i=0}^N(\tau_i'-\tau_i)$\,.

\begin{definition}[Hybrid Trajectory]\label{dfn:HT}
A hybrid trajectory of $\mathcal{H}$ starting from an initial point $(q_0,\xx_0)\in D_{\mathcal H}$ is a triple $\omega=(\tau,\alpha,\beta)$, where $\tau=\{I_i\}_{i=0}^N$ is a hybrid time set, and $\alpha=\{\alpha_i:I_i\rightarrow Q\}_{i=0}^N$ and $\beta=\{\beta_i:I_i\rightarrow \mathbb R^n\}_{i=0}^N$ are two sequences of functions satisfying:
\begin{enumerate}
  \item Initial condition: $\alpha_0[0]=q_0$ and $\beta_0[0]=\xx_0$;
  \item Discrete transition: for all $i<\<\tau\rg$,  $e=\big(\alpha_i(\tau_i'),\alpha_{i+1}(\tau_{i+1})\big)\in E$, $\beta_i(\tau_i')\in G_e$ and $\beta_{i+1}(\tau_{i+1})=\beta_i(\tau_i')$;
  \item Continuous evolution: for all $i\leq \<\tau\rg$ with $\tau_i<\tau_i'$, if $q=\alpha_i(\tau_i)$, then
      \begin{itemize}
        \item[(1)]for all $t\in I_i$, $\alpha_i(t)=q$,
        \item[(2)]$\beta_i(t)$ is the solution to the differential equation $\dot {\xx}=\fb_q(\xx)$ over $I_i$ starting from $\beta_i(\tau_i)$, and
        \item[(3)]for all $t\in[\tau_i,\tau_i')$, $\beta_i(t)\in D_q$\enspace.
      \end{itemize}
\end{enumerate}
\end{definition}

\subsection*{Proof of Theorem \ref{thm:main}}
\begin{proof} We prove that the three requirements in Problem \ref{dfn:prblm} are satisfied by $\mathcal H'$.

\vspace{.1cm}
\noindent {\bf (r1)} \quad By (c1), we get $D_q'\subseteq D_q$ for all $q\in Q$; by the definition of $G_e'$, $G_e'\subseteq G_e$ for all $e\in E$.
\vspace{.2cm}

\noindent {\bf(r2)}\quad Suppose $\omega=(\tau,\alpha,\beta)$ is a hybrid trajectory starting from $(q_0,\xx_0)\in D_{\mathcal H'}$. We prove
      \begin{equation}\label{eq:r2}
        \forall i\leq \<\tau\rg\,\forall t\in I_i.\, \beta_i(t)\in S_{\alpha_i(t)}
      \end{equation}
      by induction on $\<\tau\rg$.

      If $\<\tau\rg=0$, then $I_0=[0,T_0\rg$ is a right-open or right-closed interval for some $T_0\geq 0$. If $T_0=0$ then $I_0=\{0\}$. By (c1) and condition 1 of Definition \ref{dfn:HT} we have $\beta_0(0)=\xx_0\in D_{q_0}'\subseteq S_{q_0} =S_{\alpha_0(0)}$. If $T_0>0$, by condition 1 and 3 of Definition \ref{dfn:HT} as well as (c1), we have for all $t\in [0,T_0)$, $\beta_0(t)\in D_{q_0}'\subseteq S_{q_0} =S_{\alpha_0(t)}$. If $I_0\neq [0,T_0)$, i.e. $I_0=[0,T_0]$, by noticing that $D_{q_0}'$ is a closed set and $\beta_0(t)$ is continuous over $I_0$, we get $\beta_0(T_0)\in D_{q_0}'\subseteq S_{q_0}= S_{\alpha_0(T_0)}$. Thus we have proved in all cases, $\forall t\in I_0.\, \beta_0(t)\in S_{\alpha_0(t)}$.

      Assume (\ref{eq:r2}) holds for $\<\tau\rg=k\geq 0$. When $\<\tau\rg=k+1$, by assumption, for all $i\leq k$ and all $t\in I_i$ we have $\beta_i(t)\in S_{\alpha_i(t)}$. By condition 2 in Definition \ref{dfn:HT}, there exists $e=(q_k,q_{k+1})\in E$ such that $\alpha_k(\tau_k')=q_k$, $\alpha_{k+1}(\tau_{k+1})=q_{k+1}$ and
      $\beta_{k+1}(\tau_{k+1})=\beta_k(\tau_k')\in G_e'\subseteq D_{q_{k+1}}'$. Consider the hybrid trajectory starting from  $\big(q_{k+1},\beta_{k+1}(\tau_{k+1})\big)\in D_{\mathcal H'}$.
      By applying the same analysis we do for case $\<\tau\rg=0$, we can get $\beta_{k+1}(t)\in S_{\alpha_{k+1}(t)}$ for all $t\in I_{k+1}$. Thus we have proved that (\ref{eq:r2}) holds for $\<\tau\rg=k+1$. Then by induction, (\ref{eq:r2}) holds for all $\<\tau\rg$.
\vspace{.3cm}

\noindent {\bf (r3)} Given $(q_0,\xx_0)\in D_{\mathcal H'}$ (hence $D_{q_0}'\neq \emptyset$), we will construct a non-blocking  hybrid trajectory starting from $(q_0,\xx_0)$.

Suppose $\xx(t)$ is the continuous evolution defined by $\fb_{q_0}$ starting from $\xx_0$. Let $T_{\max}$ be the maximal positive $T$ satisfying \begin{equation}\label{eqn:maximumt}
\xx(t)\in D_{q_0}'\cap H_{q_0}\,\, \mbox{for all}\,\, t\in [0,T)\,,
\end{equation}
if such $T$ exists, and $T_{\max}=0$ otherwise.

If $T_{\max}=\infty$, then we already get an infinite hybrid trajectory starting from $(q_0,\xx_0)$.

If $T_{\max}<\infty$, then by the completeness of $\fb_{q_0}$, $\xx(t)$ must exist on  $[0,T_{\max}+\varepsilon]$ for some $\varepsilon>0$. We assert that
\begin{equation}\label{eq:r3}
\xx(T_{\max})\in (H_{q_0})^c=\bigcup_{e=(q_0,q')\in E}G_e'\enspace .
\end{equation}
If not, i.e. $\xx(T_{\max})\in H_{q_0}$, then there exists $0<\varepsilon'<\varepsilon$ s.t. $\xx(t)\in H_{q_0}$ on $[0,T_{\max}+\varepsilon']$, because by assumption $H_{q_0}$ is an open set. Then by (c2) and the definition of continuous invariant, we get $\xx(t)\in D_{q_0}'\cap H_{q_0}$ on $[0,T_{\max}+\varepsilon']$, so $T_{\max}$ could not be maximal. Therefore (\ref{eq:r3}) holds. Then there exists $e=(q_0,q')\in E$ such that $\xx(T_{\max})\in G_e'\subseteq D_{q'}'$, so we can make a discrete jump from $q_0$ to $q'$ and extend the hybrid trajectory by continuous evolution at $q'$.

Such extension either ends with a trajectory satisfying $\|\tau\|=\infty$ or goes on forever resulting $\<\tau\rg=\infty$.
\qed
\end{proof}

\section{Proof of Proposition \ref{prop:my-polyhedra}}\label{app:b}
To prove Proposition~\ref{prop:my-polyhedra}, we first introduce the famous \emph{Farkas' lemma} \cite{Tiwari08} in the theory of linear programming.

\begin{lemma}[Farkas' Lemma]\label{lem:farkas}
For linear formulas $p_j,r_k\in \mathbb R[\xx]$, the formula
$\bigwedge_{j\in J} p_j>0 \wedge \bigwedge_{k\in K} r_k\geq 0$ is unsatisfiable (over the reals) if and only if there exist non-negative constants $\mu$, $\mu_j \,(j\in J)$ and $\nu_k \,(k\in K)$ such that
$$\mu+\sum_{j\in J} \mu_j p_j + \sum_{k\in K} \nu_k r_k = 0$$ and
at least one of $\mu_j,\mu$ is strictly positive.
\end{lemma}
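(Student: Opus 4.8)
The plan is to prove the two implications separately, treating the forward (\emph{if}) direction as a short direct verification and reserving the real work for the converse (\emph{only if}) direction, which carries the content and where a theorem of the alternative must be invoked. Throughout, ``$=0$'' is read as identity of affine polynomials, i.e. all coefficients (including the constant term) vanish. For soundness, assume nonnegative constants $\mu,\mu_j,\nu_k$ exist with $\mu+\sum_{j}\mu_j p_j+\sum_k\nu_k r_k=0$ identically and with at least one of $\mu,\mu_j$ strictly positive. If some $\xx^\ast$ satisfied $p_j(\xx^\ast)>0$ for all $j\in J$ and $r_k(\xx^\ast)\geq0$ for all $k\in K$, then evaluating the identity at $\xx^\ast$ makes every summand nonnegative, while the strict-positivity hypothesis forces one summand ($\mu$, or $\mu_{j_0}p_{j_0}(\xx^\ast)$ at the witnessing index) to be strictly positive; hence the left-hand side is $>0$, contradicting that it equals $0$. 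I would underline here why the ``at least one strictly positive'' clause is indispensable: without it the trivial certificate $\sum_k\nu_k r_k=0$ (for instance $r_1+r_2=0$ with $r_1=x,\ r_2=-x$) could arise for a perfectly satisfiable system.

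For completeness, write each affine form as $p_j(\xx)=\mathbf{a}_j\cdot\xx+b_j$ and $r_k(\xx)=\mathbf{c}_k\cdot\xx+d_k$, and homogenize by adjoining a fresh variable $t$: set $\tilde p_j(\xx,t)=\mathbf{a}_j\cdot\xx+b_j t$ and $\tilde r_k(\xx,t)=\mathbf{c}_k\cdot\xx+d_k t$. A routine scaling argument shows the original affine system is unsatisfiable over $\mathbb R^n$ exactly when the homogeneous system $\tilde p_j>0\ (j\in J),\ t>0,\ \tilde r_k\geq0\ (k\in K)$ has no solution $(\xx,t)\in\mathbb R^{n+1}$: one direction sends a solution $\xx$ to $(\xx,1)$, the other divides a solution with $t>0$ by $t$ and uses homogeneity.

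To this homogeneous system I would apply Motzkin's transposition theorem, the homogeneous theorem of the alternative for mixed strict/non-strict linear inequalities. Grouping the rows $\{\tilde p_j\}_{j\in J}$ together with the functional $(\xx,t)\mapsto t$ as the strict block, and $\{\tilde r_k\}_{k\in K}$ as the non-strict block, infeasibility yields nonnegative multipliers $\mu_j$ (on $\tilde p_j$), $\sigma$ (on the $t$-row), and $\nu_k$ (on $\tilde r_k$), with the strict-block multipliers $(\mu_j,\sigma)$ not all zero, such that $\sum_j\mu_j\tilde p_j+\sigma\,t+\sum_k\nu_k\tilde r_k$ vanishes identically as a functional on $\mathbb R^{n+1}$. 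Splitting this identity into its $\xx$-part, which gives $\sum_j\mu_j\mathbf{a}_j+\sum_k\nu_k\mathbf{c}_k=\mathbf 0$, and its $t$-part, and then setting $\mu:=\sigma$, recovers precisely $\mu+\sum_j\mu_j p_j+\sum_k\nu_k r_k=0$ as an affine identity, while ``$(\mu_j,\sigma)$ not all zero'' translates exactly into ``at least one of $\mu_j,\mu$ is strictly positive''.

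The main obstacle is Motzkin's transposition theorem itself, that is, the passage from infeasibility of a mixed strict/non-strict homogeneous system to a certificate whose multipliers on the strict rows are not all zero. If a fully self-contained argument is wanted, I would derive it from the classical Farkas lemma (equivalently from the separating-hyperplane theorem, i.e. closedness of finitely generated convex cones); the delicate point there is forcing a strictly positive multiplier on the strict part, which one handles by introducing an auxiliary slack coordinate bounding the strict functionals away from $0$ and passing to a limit of the resulting feasibility certificates. Everything else---the scaling equivalence of the two systems and the coefficient bookkeeping that matches the $\xx$- and $t$-parts to the affine identity---is routine.
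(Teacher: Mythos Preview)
Your proposal is correct, but it is worth noting that the paper does not prove this lemma at all: it simply states Farkas' Lemma as a known result with a citation to the literature and then invokes it as a black box in the proof of Proposition~\ref{prop:my-polyhedra}. So there is no ``paper's own proof'' to compare against.

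That said, your argument is sound. The easy direction is handled cleanly, including the remark on why the strict-positivity clause is essential. For the hard direction, the homogenization trick (adjoining $t$ and replacing the affine system by a homogeneous one with the extra strict constraint $t>0$) followed by Motzkin's transposition theorem is a standard and valid route; the bookkeeping that matches the $t$-coefficient to $\mu$ and the $\xx$-coefficients to the affine identity is correct, and the ``not all zero on the strict block'' clause of Motzkin translates exactly into the required strict positivity among $\mu,\mu_j$. Your closing remark that Motzkin itself can be derived from the classical Farkas lemma (or a separating-hyperplane argument) is accurate, and the caveat about forcing a strictly positive multiplier on the strict part is the genuine subtlety there. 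In short, you have supplied a self-contained proof where the paper is content to cite one.
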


\subsection*{Proof of Proposition \ref{prop:my-polyhedra}}
\begin{proof}
Let $\mathbf{q}_i$ denote the $i$-th row vector of the matrix $Q$ and $\rho_i$ denote the $i$-th entry of column vector $\rho$. Then $\mathbf q_i \xx = \rho_i$ stands for the $i$-th facet of the polyhedron $Q\xx \leq \rho$. By a generalization of Theorem \ref{thm:crit} (see \cite{emsoft11} for the detail), $Q\xx \leq \rho$ is a CI of $(\mathbf c\mathbf x < a, A\xx + \mathbf b)$ if for all $1\leq i\leq r$, the following implication holds:

$$Q\xx \leq \rho \wedge \mathbf q_i \xx = \rho_i \wedge \mathbf c\mathbf x < a \Longrightarrow \mathbf q_i (A\xx + \mathbf b) < 0\enspace ,$$
which is equivalent to
\begin{equation}\label{eqn:farkas1}
-Q\xx+\rho\geq 0 \,\wedge \mathbf q_i \xx - \rho_i \geq 0 \,\wedge \mathbf q_i (A\xx + \mathbf b) \geq 0\,\wedge \mathbf {-c}\mathbf x + a >0
\end{equation}
is unsatisfiable.

By Lemma \ref{lem:farkas}, the unsatisfiability of (\ref{eqn:farkas1}) is equivalent to the existence of constant $\gamma_i^i$, and non-negative constants  $\gamma_i^1,\gamma_i^2,\ldots,\gamma_i^{i-1},\gamma_i^{i+1},\ldots\gamma_i^r$, $\eta_i$, $\lambda_i$, $\xi_i$ such that
\begin{equation}\label{eqn:farkas2}
\sum_{1\leq i\leq r}\gamma_i (\mathbf {-q}_i \xx + \rho_i) + \lambda_i \mathbf q_i (A\xx + \mathbf b) + \xi_i (\mathbf {-c}\mathbf x + a) + \eta_i = 0
\end{equation}
and $\xi_i+\eta_i>0$. By equating the coefficients of the left side of (\ref{eqn:farkas2}) to $0$, we get
\begin{itemize}
  \item[(1)] $\sum_{i=1}^{r} (-\gamma_i\mathbf q_i) + \lambda_i\mathbf q_i A - \xi_i \mathbf c = 0$\,;
  \item[(2)] $\sum_{i=1}^{r} \gamma_i\rho_i + \lambda_i\mathbf q_i \mathbf b + \xi_i a + \eta_i = 0$\,; and
  \item[(3)] $\xi_i+\eta_i>0$\,,
\end{itemize}
the matrix form of which corresponds to the three conditions in Proposition \ref{prop:my-polyhedra}.
  \qed
\end{proof}

\end{document}